\let\doendproof\endproof
\renewcommand\endproof{\qed\doendproof}
\let\set\mathbb
\newcommand{\K}{\set{K}}
\newcommand{\N}{\set{N}}
\newcommand{\Z}{\set{Z}}
\newcommand{\KX}{\set{K}[\fls{x}]}
\newcommand{\ass}{\mathrel{:=}}     
\newcommand{\reserved}[1]{\textbf{#1}} 
\newcommand{\DO}{\reserved{do}}
\newcommand{\OD}{\reserved{end~while}}
\newcommand{\WHILE}{\reserved{while}}
\newcommand{\IF}{\reserved{if}}
\newcommand{\FI}{\reserved{end~if}}
\newcommand{\THEN}{\reserved{then}}
\newcommand{\ELSE}{\reserved{else}}
\newcommand{\I}{\mathcal{I}}
\newcommand{\hg}{\operatorname{hg}}
\newcommand{\fn}[1]{(#1)^{\underline{n}}}
\newcommand{\ideal}{\vartriangleleft}
\newcommand{\fls}[1]{\vec{#1}} 
\newcommand{\Aligator}{\textsc{Aligator}\xspace}
\algrenewcommand\algorithmicrequire{\textbf{Input:}}
\algrenewcommand\algorithmicensure{\textbf{Output:}}
\algnewcommand\algorithmicand{\textbf{\textsf{AND}}}
\algrenewcommand\algorithmicwhile{\textbf{\textsf{WHILE}}}
\algrenewcommand\algorithmicfor{\textbf{\textsf{FOR}}}
\algrenewcommand\algorithmicdo{\textbf{\textsf{DO}}}
\algrenewcommand\algorithmicreturn{\textbf{\textsf{RETURN}}}
\newcommand{\algrule}[1][.2pt]{\par\vskip.5\baselineskip\hrule height #1\par\vskip.5\baselineskip}
\def\expandafter\normalsize\expandafter{%
\normalsize
\setlength\abovedisplayskip{5pt}
\setlength\belowdisplayskip{6pt}
\setlength\abovedisplayshortskip{6pt}
\setlength\belowdisplayshortskip{6pt}
}
\title{
Invariant Generation for Multi-Path Loops with Polynomial Assignments}
\author{Andreas Humenberger \and Maximilian Jaroschek \and Laura Kov\'acs%
\thanks{All authors are supported by the ERC Starting Grant 2014 SYMCAR
639270. Furthermore, we acknowledge funding from the Wallenberg Academy
Fellowship 2014 TheProSE, the Swedish VR grant GenPro D0497701, and the Austrian
FWF research project RiSE S11409-N23. We also acknowledge support from the FWF
project W1255-N23.}}
\institute{Technische Universit\"at Wien\\
Institut f\"ur Informationssysteme 184\\
Favoritenstra{\ss}e 9-11\\
Vienna A-1040, Austria\\
\email{ahumenbe@forsyte.at}\\
\email{maximilian@mjaroschek.com}\\
\email{lkovacs@forsyte.at}}
\begin{document}

\urlstyle{tt}

\maketitle

\begin{abstract}
Program analysis requires the generation of program properties expressing
conditions to hold at intermediate program locations. When it comes to programs
with loops, these properties are typically expressed as loop invariants. In this
paper we study a class of multi-path program loops with numeric variables, in
particular nested loops with conditionals, where assignments to program
variables are polynomial expressions over program variables. We call this class
of loops {\it extended P-solvable} and introduce an algorithm for generating all
polynomial invariants of such loops. By an iterative procedure employing
Gr\"obner basis computation, our approach computes the polynomial ideal of the
polynomial invariants of each program path and combines these ideals
sequentially until a fixed point is reached. This fixed point represents the
polynomial ideal of all polynomial invariants of the given extended P-solvable
loop. We prove termination of our method and show that the maximal number of
iterations for reaching the fixed point depends linearly on the number of
program variables and the number of inner loops. In particular, for a loop with
$m$ program variables and $r$ conditional branches we prove an upper bound of
$m\cdot r$ iterations. We implemented our approach in the \Aligator{} software
package. Furthermore, we evaluated it on 18 programs with polynomial arithmetic
and compared it to existing methods in invariant generation. The results show
the efficiency of our approach.
\end{abstract}


\section{Introduction}


Reasoning about programs with loops requires loop invariants expressing
properties that hold before and after every loop iteration. The difficulty of
generating such properties automatically comes from the use of non-linear
arithmetic, unbounded data structures, complex control flow, just to name few of
the reasons. In this paper we focus on multi-path loops with numeric variables
and polynomial arithmetic and introduce an automated approach inferring {\it
all} loop invariants as polynomial equalities among program variables. For doing
so,   we identify a class of multi-path loops with nested conditionals, where
assignments to program variables are polynomial expressions over program
variables. Based on our previous work~\cite{issac2017}, we call this class of
loops \emph{extended P-solvable}. Compared to~\cite{issac2017} where only
single-path programs with polynomial arithmetic were treated, in this paper we
generalize the notion of  {extended  P-solvable} loops to multi-path loops;
single-path loops being thus a special case of our method. 

For the class of extended P-solvable loops, we introduce an automated approach
computing all polynomial invariants. Our work exploits the results
of~\cite{kapur,reasoningalgebraically} showing that the set of polynomial
invariants forms a polynomial ideal, called the polynomial invariant ideal.
Hence, the task of generating all polynomial invariants reduces to the problem
of generating a basis of the polynomial invariant ideal. Following this
observation, given an extended P-solvable loop with nested conditionals, we
proceed as follows: we (i) turn the  multi-path loop into a sequence of
single-path loops, (ii) generate the polynomial invariant ideal of each
single-path loop and (iii) combine these ideals iteratively until the polynomial
invariant ideal of the multi-path loop is derived. 

A crucial property of extended P-solvable loops is that the single-path loops
corresponding to one path of the multi-path loop are also extended P-solvable.
For generating the polynomial invariant ideal of extended P-solvable single-path
loops, we model loops by a system of algebraic recurrences, compute the closed
forms of these recurrences by symbolic computation as described
in~\cite{issac2017} and compute the Gr\"obner basis of the polynomial invariant
ideal from the system of closed forms. When combining the polynomial invariant
ideals of each extended P-solvable single-path loop, we prove that the
``composition'' maintains the properties of extended P-solvable loops. Further,
by exploiting the algebraic structures of the polynomial invariant ideals of
extended P-solvable loops, we prove that the process of iteratively combining
the polynomial invariant ideals of each extended P-solvable single-path loop is
finite. That is, a fixed point is reached in a finite number of steps. We prove
that this fixed point is the polynomial invariant ideal of the extended
P-solvable loop with nested conditionals. We also show that reaching the fixed
point depends linearly on the number of program variables and the number of
inner loops. In particular, for a loop with $m$ program variables and $r$ inner
loops (paths) we prove an upper bound of $m\cdot r$ iterations. The termination
proof of our method implies the completeness of our approach: for an extended
P-solvable loop with nested conditionals, our method computes all its polynomial
invariants. This result generalizes and corrects the result
of~\cite{completeinvariant} on programs for more restricted arithmetic than
extended P-solvable loops. Our class of programs extends the programming model
of~\cite{completeinvariant} with richer arithmetic and our invariant generation
procedure also applies to~\cite{completeinvariant}. As such, our proof of
termination also yields a termination proof for~\cite{completeinvariant}. 

We implemented our approach in the open source Mathematica package \Aligator{}
and evaluated our method on 18 challenging examples. When compared to
state-of-the-art tools in invariant generation, 
\Aligator{} performed much better in 14 examples out of 18. 

The paper is organized as follows: We start by giving the necessary details
about our programming model in Section~\ref{sec:programming-model} and provide
background about polynomial rings and ideals in Section~\ref{sec:ideals}. In
Section~\ref{sec:loopass} we recall the notion of extended P-solvable loops
from~\cite{issac2017}. The lemmas and propositions of
Section~\ref{sec:dependencies} will then help us to prove termination of our
invariant generation procedure in Section~\ref{sec:loopcond}.
Finally, Section~\ref{sec:implementation} describes our implementation in \Aligator, together with an experimental evaluation of our
approach. \\

\noindent{\bf Related Work.} 
Generation of non-linear loop invariants 
has been addressed in previous research. We discuss here some of the
most related works that we are aware of. 

The methods of \cite{olm,san} compute polynomial equality invariants
by fixing an a priori 
bound on the degree of the polynomials. Using this bound, a
template invariant of fixed degree is constructed. Properties of
polynomial invariants, e.g. inductiveness, are used  to generate
constraints over the unknown coefficients of the template coefficients
and these constraints are then solved in linear or polynomial
algebra. An a priori fixed polynomial degree is also used
in~\cite{Carbonell07a,farewellgroebner}. Unlike these approaches, in our work we do not
fix the degree of polynomial invariants but generate all polynomial
invariants (and not just invariants up to a fixed degree). Our
restrictions come in the programming model, namely treating only loops
with nested conditionals and polynomial arithmetic. For such programs,
our approach is complete. 

Another line of research uses abstract interpretation in conjunction
with recurrence solving and/or polynomial algebra. The work
of~\cite{kapur} generates all polynomial invariants of so-called {\it
  simple loops} with
nested conditionals. The approach combines abstract interpretation
with polynomial ideal theory. Our model of extended P-solvable loops
is much more general than simple loops, for example we allow
multiplication with the loop counter and treat 
algebraic, and not only rational, numbers in closed form solutions. 
Abstract interpretation is also used in~\cite{farzan,oliveira,kincaidPOPL18} to
infer non-linear invariants. The programming model of these works
handle loops whose assignments induce
linear recurrences with constant coefficients. Extended P-solvable
loops can however yield more complex recurrence equations.  In
particular, when comparing our work to~\cite{kincaidPOPL18}, we note
that the recurrence equations of program variables
in~\cite{kincaidPOPL18} correspond to a subclass of linear recurrences
with constant coefficients: namely, recurrences whose closed form
representations do not include non-rational algebraic numbers. Our work
treats the entire class of linear recurrences with constant
coefficients and even handles programs whose arithmetic operations
induce a class of linear recurrences with polynomial coefficients in
the loop counter. While the non-linear arithmetic of our work is more
general than the one in~\cite{kincaidPOPL18}, we note that the
programming model of~\cite{kincaidPOPL18} can handle programs that are
more complex than the ones treated in our work, in particular due to
the presence of nested loops and function/procedure calls. Further,
the invariant generation approach of~~\cite{kincaidPOPL18} is
property-guided: invariants are generated in order to prove the safety
assertion of the program. Contrarily to this, we generate all
invariants of the program and not only the ones implying the safety
assertion.

Solving recurrences and computing polynomial invariant ideals from a
system of closed form solution is also described
in~\cite{reasoningalgebraically}. Our work builds upon the results
of~\cite{reasoningalgebraically} but
generalizes~\cite{reasoningalgebraically} to extended P-solvable
loops. Moreover, we also prove that our invariant generation
procedure terminates. Our termination result generalizes~\cite{completeinvariant} by
handling programs with more complex polynomial
arithmetic. Furthermore, 
instead of computing the
invariant ideals of all permutations of a given set of inner loops and extending
this set until a polynomial ideal as a fixed point is reached, we
generate the polynomial invariant ideal of just one
permutation iteratively until we reach the fixed point. As a result we have to
perform less Gr\"obner basis computations in the process of invariant generation.

A data-driven approach to invariant generation is given in~\cite{datadriveninvariants}, where concrete
program executions are used to generate invariant candidates. Machine learning
is then used to infer polynomial invariants from the candidate
ones. In our work we do not use invariant candidates. While the program
flow in our 
programming model is more restricted then~\cite{datadriveninvariants}, to the best of our knowledge, none of the
above cited methods can fully handle the polynomial arithmetic of extended
P-solvable loops.



\section{Preliminaries}
\label{sec:prelims}

\subsection{Programming Model and Invariants}\label{sec:programming-model}
Let $\set K$ be a computable field of characteristic zero. This means that
addition and multiplication can be carried out algorithmically, that there
exists an algorithm to test if an element in $\set K$ is zero, and that the
field of rational numbers~$\set Q$ is a subfield of $\set K$. For variables
$x_1,\dots,x_n$, the ring of multivariate polynomials over $\set K$ is denoted
by $\set K[x_1,\dots,x_n]$, or, if the number of variables is clear from (or
irrelevant in) the context, by $\KX$. Correspondingly, $\set K(x_1,\dots,x_m)$
or $\set K(\fls{x})$ denotes the field of rational functions over $\set K$ in
$x_1,\dots,x_m$. If every polynomial in $\set K[x]$ with
a degree $\geq 1$ has at least one root in $\set K$, then $\set K$ is called
algebraically closed. An example for such a field is $\overline{\set Q}$, the
field of algebraic numbers. In contrast, the field of complex numbers $\set C$
is algebraically closed, but not computable, and $\set Q$ is computable, but not
algebraically closed. We suppose that $\set K$ is always algebraically
closed. This is not necessary for our theory, as we only need the existence of
roots for certain polynomials, which is achieved by choosing~$\set K$ to be an
appropriate algebraic extension field of $\set Q$. It does, however, greatly
simplify the statement of our results.

In our framework, we consider a program $B$ to be a loop of the form 
\begin{equation}
\label{eq:program}
    \begin{tabular}{l}
     \WHILE\ \dots\ \DO\\
      \quad $B'$\\
      \OD
    \end{tabular}
\end{equation}
    where $B'$ is a program block that is either the empty block~$\epsilon$, an
    assignment $v_i=f(v_1,\dots,v_m)$ for a rational function
    $f\in\set K(x_1,\dots,x_m)$ and program variables $v_1,\dots,v_m$, or has
    one of the composite forms 
\begin{equation*}
    \setlength{\tabcolsep}{12pt}
    \begin{tabular}{l|l|l}
      \text{sequential} & \hspace{4px}\text{inner loop} & \hspace{3px}\text{conditional}\\
      & & \\
      \ & \WHILE\ \dots\ \DO & \IF\ \dots\ \THEN\\
      \hspace{5px}$B_1; B_2$ & \quad $B_1$ & \quad $B_1$\ \ELSE\ $B_2$\\
      \ & \OD & \FI
    \end{tabular}
\end{equation*}
for some program blocks $B_1$ and $B_2$ and the usual semantics. We omit
conditions for the loop and if statements, as the problem of computing all
polynomial invariants is undecidable when taking affine equality tests into
account~\cite{olm}. Consequently, we regard loops as non-deterministic programs
in which each block of consecutive assignments can be executed
arbitrarily often. More precisely, grouping consecutive assignments into blocks
$B_1,\dots,B_r$, any execution path of $B$ can be written in the form
\[B_1^{n_1};B_2^{n_2};\dots;B_r^{n_r};B_1^{n_{r+1}};B_2^{n_{r+2}};\dots\]
for a sequence $(n_i)_{i\in\set N}$ of non-negative integers with finitely many
non-zero elements. To that effect, we interpret any given
program~\eqref{eq:program} as the set of its execution paths, written as
{ \setlength\abovedisplayskip{2pt}%
\setlength\belowdisplayskip{2pt}%
\[B=(B_1^*;B_2^*;\dots;B_r^*)^*.\]
}

We adapt the well-established Hoare triple notation
\begin{equation}
\label{eq:hoare}
\{P\}B\{Q\},
\end{equation}
for program specifications, where $P$ and $Q$ are logical formulas, called the
pre- and postcondition respectively, and $B$ is a program. In this paper we
focus on partial correctness of programs, that is a Hoare triple~\eqref{eq:hoare}
is correct if every terminating computation of $B$ which starts in a state
satisfying $P$ terminates in a state that satisfies $Q$.

In this paper we are concerned with computing polynomial invariants for a
considerable subset of loops of the form~\eqref{eq:program}. These invariants
are algebraic dependencies among the loop variables that hold after any number
of loop iterations.

\begin{definition}
  A polynomial $p\in\set K[x_1,\dots,x_m]$ is a polynomial loop invariant for a
  loop $B=B_1^*;\dots;B_r^*$ in the program variables $v_1,\dots,v_m$ with initial values
  $v_1(0),\dots,v_m(0)$, if for every sequence $(n_i)_{i\in\set N}$ of
  non-negative integers with finitely many non-zero elements, the Hoare
  triple
{ \setlength\abovedisplayskip{1pt}%
  \setlength\belowdisplayskip{1pt}%
  \begin{align*}
  & \{p(v_1,\dots,v_m)=0\wedge\bigwedge_{i=0}^m v_i=v_i(0)\}\\
  &\quad B_1^{n_1};B_2^{n_2}\dots,B_r^{n_r};B_1^{n_{r+1}};\dots\\[5px]
  & \{p(v_1,\dots,v_m)=0\}
  \end{align*}
}%
is correct.
\end{definition}


\subsection{Polynomial Rings and Ideals}
\label{sec:ideals}
 
Polynomial invariants are algebraic dependencies among the values of the
variables at each loop iteration. Obviously, non-trivial dependencies do not
always exist.

\begin{definition}
  Let $\set L\mathbin{/}\K$ be a field extension. Then ${a_1,\dots,a_n \in\set L}$
  are \emph{algebraically dependent} over $\K$ if there exists a $p \in
  \K[x_1,\dots,x_n]\setminus\{0\}$ such that $p(a_1,\dots,a_n) = 0$. Otherwise
  they are called \emph{algebraically independent}.
\end{definition}

In~\cite{laura,kapur}, it is observed that the set of all polynomial loop
invariants for a given loop forms an ideal.  It is this fact that facilitates
all of our subsequent reasoning.
\begin{definition}
  A subset $\I$ of a commutative ring $R$ is called an \emph{ideal}, written
  $\I\vartriangleleft R$, if it satisfies the following
  three properties:
  \begin{enumerate}
  \item $0\in\I$.
  \item For all $a,b\in\I$: $a+b\in\I$.
  \item For all $a\in\I$ and $b\in R$: $a\cdot b\in\I$.
  \end{enumerate}
\end{definition}

\goodbreak

\begin{definition}
  Let $\I\ideal R$. Then $\I$ is called
  \begin{itemize}
    \item \emph{proper} if it is not equal to $R$,
    \item \emph{prime} if $a \cdot b \in\I$ implies $a\in\I$ or $b\in\I$, and
    \item \emph{radical} if $a^n\in\I$ implies $a\in\I$.
  \end{itemize}
  The \emph{height} $\hg(\I)\in\set N$ of a prime ideal $\I$ is equal to $n$ if
  $n$ is the maximal length of all possible chains of prime ideals $\I_0\subset
  \I_2\subset\dots\subset \I_n=\I$.
\end{definition}

\begin{example}
  The set of even integers $2\Z$ is an ideal of $\Z$. In general $n\Z$ for a
  fixed integer $n$ is an ideal of $\Z$. It is prime if and only if $n$ is a
  prime number.
\end{example}

Polynomial ideals can informally be interpreted as the set of all consequences
when it is known that certain polynomial equations hold. In fact, if we have
given a set $P$ of polynomials of which we know that they serve as algebraic
dependencies among the variables of a given loop, the ideal generated by~$P$
then contains all the polynomials that consequently have to be polynomial
invariants as well.

\begin{definition}
A subset $B\subseteq\I$ of an ideal $\I\vartriangleleft R$ is called a \emph{basis} for
$\I$ if 
\begin{equation*}
\I=\langle B\rangle :=\{a_0b_0+\dots+a_mb_m\mid m\in\set N,
  a_0,\dots,a_m\in R,b_0\dots,b_m\in B\}.
\end{equation*}
We say that $B$ \emph{generates} $\I$.
\end{definition}

A basis for a given ideal in a ring does not necessarily have to be finite.
However, a key result in commutative algebra makes sure that in our setting we
only have to consider finitely generated ideals.

\begin{theorem}[Hilbert's Basis Theorem -- Special case]
  Every ideal in $\KX$ has a finite basis. 
\end{theorem}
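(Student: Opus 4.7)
The plan is to deduce the statement from the more general \emph{Hilbert Basis Theorem}: if $R$ is a commutative ring in which every ideal is finitely generated (a \emph{Noetherian} ring), then so is the univariate polynomial ring $R[x]$. Granting this, an induction on the number of variables yields the result, because a field $\K$ is trivially Noetherian (its only ideals are $\{0\}$ and $\K$ itself), and therefore $\K[x_1]$, $\K[x_1,x_2] = \K[x_1][x_2]$, and so on up to $\K[x_1,\dots,x_n] = \KX$ are all Noetherian. Hence every ideal in $\KX$ has a finite basis.

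For the inductive step, suppose $R$ is Noetherian and let $\I \ideal R[x]$. I would build a basis by the following greedy procedure: pick $f_1 \in \I$ of minimal degree, and as long as $\langle f_1, \dots, f_k \rangle \subsetneq \I$, pick $f_{k+1} \in \I \setminus \langle f_1, \dots, f_k \rangle$ of minimal degree in this difference. Let $d_i$ and $a_i$ denote the degree and leading coefficient of $f_i$; by construction $d_1 \leq d_2 \leq \cdots$. The goal is to show that this procedure terminates, in which case the finitely many $f_i$ produced automatically form a basis of $\I$.

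The main obstacle — and the core of the argument — is proving termination by combining the Noetherian property of $R$ with a leading-coefficient argument. I would consider the ideal $\langle a_1, a_2, \dots\rangle \ideal R$; because $R$ is Noetherian, it equals $\langle a_1, \dots, a_k \rangle$ for some finite $k$. If the procedure did not halt at step $k$, then $a_{k+1} = \sum_{i=1}^k u_i a_i$ for some $u_i \in R$, and the polynomial $g := \sum_{i=1}^k u_i x^{d_{k+1} - d_i} f_i$ lies in $\langle f_1, \dots, f_k\rangle$ with the same leading term as $f_{k+1}$. Then $f_{k+1} - g \in \I$ has degree strictly less than $d_{k+1}$ while still lying outside $\langle f_1, \dots, f_k\rangle$, contradicting the minimality in the choice of $f_{k+1}$. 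Hence the procedure halts, and the finitely many $f_i$ it produces form a basis of $\I$, completing the inductive step and with it the proof.
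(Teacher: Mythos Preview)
Your proof is correct and is essentially the standard textbook argument for Hilbert's Basis Theorem (reduce to the one-variable case by induction, then argue via the ideal of leading coefficients). One tiny omission: you should note that $\I=\{0\}$ is handled trivially before picking $f_1$, and that $f_{k+1}-g$ could in principle be zero, but that case immediately yields $f_{k+1}\in\langle f_1,\dots,f_k\rangle$, a contradiction. Neither affects the validity of the argument.

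As for comparison with the paper: there is nothing to compare. The paper states this theorem without proof, treating it as a classical result from commutative algebra quoted for background. Your write-up supplies what the paper deliberately omits.
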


Subsequently, whenever we say we are given an ideal $\I$, we mean that we have
given a finite basis of $\I$.  

There is usually more than one basis for a given ideal and some are more useful
for certain purposes than others. In his seminal PhD thesis~\cite{Buchberger06},
Buchberger introduced the notion of Gr\"obner bases for polynomial ideals and an
algorithm to compute them. While, for reasons of brevity, we will not formally
define these bases, it is important to note that with their help, central
questions concerning polynomial ideals can be answered algorithmically.

\begin{theorem}
\label{thm:gb}
  Let $p\in\set K[x_1,\dots,x_n]$ and $\I,\mathcal{J}\vartriangleleft\set
  K[x_1,\dots,x_n]$. There exist algorithms to decide the following problems.
  \begin{enumerate}
  \item Decide if $p$ is an element of $\I$.
  \item Compute a basis of $\I+ \mathcal{J}$.
  \item Compute a basis of $\I\cap \mathcal{J}$.
  \item For $\{\tilde{x}_1,\dots\tilde{x}_m\}\subseteq\{x_1,\dots,x_n\}$,
  compute a basis of $\I\cap \set K[\tilde{x}_1,\dots,\tilde{x}_m]$.
\item Let $q\in\KX$. Compute a basis for 
{ \setlength\abovedisplayskip{5pt plus 2pt minus 2pt}%
  \setlength\belowdisplayskip{5pt plus 2pt minus 2pt}%
\[\I\mathbin{:}\langle q\rangle^\infty:=\{q\in\KX\mid \exists n\in\set N:
  q^np\in\I\}.\] 
}
The ideal $\I\mathbin{:}\langle q\rangle^\infty\kern-1pt$ is called the
  \emph{saturation} of $\I$ with respect to $q$.
  \end{enumerate}
\end{theorem}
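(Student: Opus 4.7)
The plan is to reduce each of the five tasks to a Gröbner basis computation, which is algorithmic by Buchberger's algorithm (the seminal result just cited). For item (1), I would compute a Gröbner basis $G$ of $\I$ with respect to an arbitrary monomial order and then perform multivariate polynomial reduction of $p$ modulo $G$; the defining property of a Gröbner basis says that the resulting normal form is zero if and only if $p\in\I$. Item (2) is immediate and does not even require a Gröbner basis: the union of a finite basis of $\I$ and a finite basis of $\mathcal{J}$ generates $\I+\mathcal{J}$ by the definition of an ideal sum.

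For item (4), the key tool is an elimination monomial order on $\KX$, i.e.\ a lexicographic (or block) order in which every monomial containing one of the variables $\{x_1,\dots,x_n\}\setminus\{\tilde x_1,\dots,\tilde x_m\}$ is larger than any monomial in $\set K[\tilde x_1,\dots,\tilde x_m]$. Running Buchberger's algorithm on a basis of $\I$ with respect to such an order produces a Gröbner basis $G$, and the Elimination Theorem states that $G\cap\set K[\tilde x_1,\dots,\tilde x_m]$ is a Gröbner basis of $\I\cap\set K[\tilde x_1,\dots,\tilde x_m]$, which can therefore be read off by inspection.

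Items (3) and (5) reduce to (4) via well-known encodings with a fresh variable $t$. For intersection I would use the identity
\[
\I\cap\mathcal{J}\;=\;\bigl(t\,\I\;+\;(1-t)\,\mathcal{J}\bigr)\cap\KX,
\]
viewed in $\KX[t]$; verifying it is a short check (if $p\in\I\cap\mathcal{J}$ then $p=tp+(1-t)p$ lies in the right-hand side, and conversely any $h\in\KX$ in the right-hand side yields $h\in\mathcal{J}$ by substituting $t=0$ and $h\in\I$ by substituting $t=1$). For saturation, I would apply the Rabinowitsch trick
\[
\I\mathbin{:}\langle q\rangle^\infty\;=\;\bigl(\I+\langle 1-tq\rangle\bigr)\cap\KX,
\]
whose forward inclusion is immediate and whose reverse inclusion follows by writing any $h$ in the right-hand side as $h=\sum a_i f_i + b(1-tq)$, clearing the denominators in $t$ via multiplication by a suitable power of $q$, and then substituting $t=1/q$ to deduce $q^N h\in\I$. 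Eliminating $t$ in each case via (4) produces the required basis.

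The main obstacles are not computational but conceptual: one must justify the Elimination Theorem underlying (4) and verify the two ideal identities used in (3) and (5). Everything else is mechanical once Buchberger's algorithm is available. Since these verifications are standard results in commutative algebra, I would cite them rather than reprove them, and the proof becomes essentially a catalogue of the five constructions above together with a proof of correctness for each.
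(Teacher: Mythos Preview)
Your proposal is correct and follows the standard textbook route (normal-form test for membership, union of generators for the sum, elimination orders for restriction, and the auxiliary-variable tricks $t\I+(1-t)\mathcal J$ and $\I+\langle 1-tq\rangle$ for intersection and saturation). There is, however, nothing in the paper to compare it against: Theorem~\ref{thm:gb} is stated in the preliminaries as a known background result, immediately after the reference to Buchberger's thesis, and the paper supplies no proof of its own. Your write-up is therefore more than what the paper itself provides; if anything, you could shorten it further by simply citing a standard reference (e.g.\ Cox--Little--O'Shea or Becker--Weispfenning) for all five items, which is effectively what the paper does implicitly.
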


We will use Gr\"obner bases to compute the ideal of all algebraic relations
among given rational functions. For this, we use the polynomials $q_iy_i-p_i$ to model
the equations $y_i=q_i/p_i$ by multiplying the equation with the denominator. In
order to model the fact that the denominator is not identically zero, and
therefore allowing us to divide by it again, we use the saturation with respect
to the least common multiple of all denominators. To see why this is necessary,
consider $y_1=y_2=\frac{x_1}{x_2}$. An algebraic relation among $y_1$ and $y_2$
is $y_1-y_2$, but with the polynomials $x_2y_1-x_1$ and $x_2y_2-x_1$, we
only can derive $x_2(y_1-y_2)$. We have to divide by $x_2$.

\begin{theorem}
  Let $r_1,\dots,r_m\in\set K(\fls{x})$ and let the numerator of $r_i\kern-2pt$ be given
  by $p_i\in\KX$ and the denominator by $q_i\in\KX$. The ideal of all
  polynomials $p$ in $\set K[\fls{y}]$ with $p(r_1,\dots,r_m)=0$ is given by
\[\biggl(\smash{\sum_{i=1}^m}\langle
  q_iy_i-p_i\rangle\biggr)\mathbin{:}\langle\operatorname{lcm}(q_1,\dots,q_m)\rangle^\infty\cap
  \set K[\fls{y}],\]
where $\operatorname{lcm}(\dots)$ denotes the least common multiple.
\end{theorem}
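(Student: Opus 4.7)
The plan is to prove both inclusions by means of the evaluation homomorphism $\phi\colon \set K[\fls{x},\fls{y}]\to \set K(\fls{x})$ fixing $\fls{x}$ and sending $y_i\mapsto r_i$, whose kernel intersected with $\set K[\fls{y}]$ is by definition the ideal of algebraic relations among $r_1,\dots,r_m$. I will write $\mathcal{J}\ass\sum_{i=1}^m\langle q_iy_i-p_i\rangle$ and $q\ass\operatorname{lcm}(q_1,\dots,q_m)$.

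For the easy inclusion, I would first note that $\phi(q_iy_i-p_i)=q_ir_i-p_i=0$, so $\mathcal{J}\subseteq\ker(\phi)$. If $q^Nf\in\mathcal{J}$ for some $N$, then $q^N\phi(f)=\phi(q^Nf)=0$ in the integral domain $\set K(\fls{x})$, which forces $\phi(f)=0$ because $q\neq 0$. Intersecting with $\set K[\fls{y}]$ gives $(\mathcal{J}\mathbin{:}\langle q\rangle^\infty)\cap\set K[\fls{y}]\subseteq\ker(\phi)\cap\set K[\fls{y}]$.

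For the reverse inclusion, let $f=\sum_\alpha c_\alpha\fls{y}^\alpha\in\set K[\fls{y}]$ of total degree $d$ satisfy $f(r_1,\dots,r_m)=0$. The key technical step is to establish, by raising $q_iy_i\equiv p_i\pmod{\mathcal{J}}$ to the power $a_i$, multiplying over $i$, and using that each $q/q_i$ lies in $\KX$, the monomial identity
\[ q^d\fls{y}^\alpha\equiv q^{d-|\alpha|}\prod_{i=1}^m\bigl(p_i\cdot q/q_i\bigr)^{a_i}\pmod{\mathcal{J}}. \]
The right-hand side, viewed inside $\set K(\fls{x})$, is precisely $q^d\prod_i r_i^{a_i}$. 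Taking the $c_\alpha$-weighted sum over $\alpha$ therefore yields $q^df\equiv q^df(r_1,\dots,r_m)=0\pmod{\mathcal{J}}$, so $q^df\in\mathcal{J}$ and hence $f\in(\mathcal{J}\mathbin{:}\langle q\rangle^\infty)\cap\set K[\fls{y}]$.

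The main obstacle will be the bookkeeping in the monomial identity above: after multiplication by $q^d$ every intermediate expression must be a genuine polynomial in $\fls{x}$, and this is precisely what is guaranteed by choosing $q$ to be the \emph{least common multiple} of the $q_i$ (so that each $q/q_i$ lies in $\KX$) rather than, say, a single $q_i$ or the product $\prod_i q_i$. Once that identity is in place, both inclusions follow immediately from the properties of $\phi$ and of saturation.
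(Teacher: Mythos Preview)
Your proposal is correct and follows the same underlying idea as the paper's proof—clear denominators by multiplying by a sufficiently high power of $d=\operatorname{lcm}(q_1,\dots,q_m)$—but you supply the details that the paper's two-line sketch omits (in particular, the explicit monomial congruence and the verification that all intermediate expressions remain in $\KX$). The only cosmetic point worth tightening is the line ``$q^df\equiv q^df(r_1,\dots,r_m)=0\pmod{\mathcal{J}}$'': the congruence lives in $\K[\fls{x},\fls{y}]$, so more precisely you have shown that $q^df$ is congruent to a specific element of $\KX$ which, viewed in $\K(\fls{x})$, equals $q^df(r_1,\dots,r_m)=0$, hence is the zero polynomial—exactly as your closing paragraph indicates.
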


\begin{proof}
  Write $d:=\operatorname{lcm}(q_1,\dots,q_m)$. The theorem can be easily verified from the fact that, for any given $p$ with
  $p(r_1,\dots,r_m)=0$, there exists a $k\in\set N$ such that $d^kp(r_1,\dots,r_m)=0$ is an
  algebraic relation for $p_1,\dots,p_m$ (by clearing denominators in the
  equation $p(r_1,\dots,r_m)=0).$ 
\end{proof}

A polynomial ideal $\I\vartriangleleft \KX$ gives rise to a set of points
in~$\set K^n$ for which all polynomials in $\I$ vanish simultaneously. This set
is called a \emph{variety}.

\begin{definition}
  Let $\I \ideal \K[x_1,\dots,x_n]$ be an ideal. The set
{ \setlength\abovedisplayskip{5pt plus 2pt minus 2pt}%
  \setlength\belowdisplayskip{5pt plus 2pt minus 2pt}%
  \[ V(\I) = \{ (a_1,\dots,a_n) \in \K^n \mid p(a_1,\dots,a_n) = 0~\text{for
      all}~p\in\I \},\]
}is the \emph{variety} defined by $\I$.
\end{definition}

Varieties are one of the central objects of study in algebraic geometry. Certain
geometric shapes like points, lines, circles or balls can be described by prime
ideals and come with an intuitive notion of a dimension, e.g.\ points have
dimension zero, lines and circles have dimension one and balls have dimension
two. The notion of the Krull dimension of a ring formalizes this intuition when
being applied to the quotient ring $\KX/\I$. In this paper, we will use the
Krull dimension to provide an upper bound for the number of necessary iterations
of our algorithm.

\begin{definition}
  The \emph{Krull dimension} of a commutative ring $R$ is the supremum of
  the lengths of all chains $\I_0\subset\I_1\subset\dots $ of prime ideals.
\end{definition}

\begin{theorem}\label{thm:finitedimension}
  The Krull dimension of $\set K[x_1,\dots,x_n]$ is equal to $n$.
\end{theorem}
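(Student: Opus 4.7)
The plan is to prove the two matching bounds on the supremum that defines the Krull dimension.

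For the lower bound, I would exhibit an explicit chain of prime ideals of length $n$, namely
\[\langle 0\rangle \subsetneq \langle x_1\rangle \subsetneq \langle x_1,x_2\rangle \subsetneq \dots \subsetneq \langle x_1,\dots,x_n\rangle.\]
Each ideal $\langle x_1,\dots,x_k\rangle$ is prime, because the quotient $\K[x_1,\dots,x_n]/\langle x_1,\dots,x_k\rangle$ is isomorphic to the polynomial ring $\K[x_{k+1},\dots,x_n]$, which is an integral domain (and $\K$ itself for $k=n$). This immediately yields $\dim \KX \geq n$.

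For the upper bound, I need to rule out chains of prime ideals of length greater than $n$. The main tool is the transcendence degree: to each prime $\I\ideal\KX$ associate the nonnegative integer $d(\I)$, defined as the transcendence degree over $\K$ of the fraction field of the integral domain $\KX/\I$. Then $d(\langle 0\rangle)=n$, while $d(\I)\geq 0$ for every prime $\I$.

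The crucial lemma is that if $\I\subsetneq\mathcal J$ are prime ideals of $\KX$, then $d(\I)>d(\mathcal J)$. I would derive this from Noether normalization applied to the integral domain $R:=\KX/\I$: one obtains algebraically independent elements $y_1,\dots,y_{d(\I)}\in R$ such that $R$ is a finite integral extension of $\K[y_1,\dots,y_{d(\I)}]$. Using lying-over and incomparability for integral extensions, the contraction of the nonzero prime $\mathcal J/\I$ to $\K[y_1,\dots,y_{d(\I)}]$ is a nonzero prime, which contains a nonzero polynomial $q$; clearing denominators and using that $R/(\mathcal J/\I)\cong\KX/\mathcal J$ is integral over $\K[y_1,\dots,y_{d(\I)}]/\langle q,\dots\rangle$, one sees that at least one of the $y_i$ becomes algebraically dependent on the others modulo $\mathcal J$, forcing the strict drop in transcendence degree.

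Once the lemma is available, any chain $\I_0\subsetneq\I_1\subsetneq\dots\subsetneq\I_k$ of prime ideals in $\KX$ satisfies $n\geq d(\I_0)>d(\I_1)>\dots>d(\I_k)\geq 0$, so $k\leq n$, giving $\dim\KX\leq n$ and completing the proof. The main obstacle is the strict-decrease lemma: a from-scratch proof would require redeveloping a nontrivial chunk of dimension theory (Noether normalization, going-up, integral dependence), so in practice I would cite it from a standard commutative algebra reference rather than spell it out here.
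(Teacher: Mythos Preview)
Your proof sketch is correct and follows a standard route to this classical result: the explicit chain gives the lower bound, and the strict decrease of transcendence degree along a proper inclusion of primes (via Noether normalization and incomparability for integral extensions) gives the upper bound. You also correctly flag that the strict-decrease lemma is where the real work lies.

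However, there is nothing to compare against: the paper does not prove this theorem at all. It is stated in the preliminaries as a known background fact from commutative algebra and used as a black box (specifically, to bound the height of the prime ideals appearing in the termination argument for Algorithm~\ref{alg:polyinv-mutlipath}). So your proposal is not an alternative to the paper's proof but rather a proof of a result the authors simply quote. If you want to include a proof in this context, a citation to a standard reference (e.g., Atiyah--Macdonald or Eisenbud) would be more in keeping with the paper's treatment than reproducing the dimension-theory argument in full.
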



\section{Extended P-Solvable Loops}

In~\cite{issac2017} the class of \emph{P-solvable}
loops~\cite{reasoningalgebraically} was extended to so-called \emph{extended
P-solvable} loops. So far, this class captures loops with assignments only,
i.e.~loops without any nesting of conditionals and loops. In
Section~\ref{sec:loopcond} we close this gap by introducing a new approach for
computing invariants of multi-path loops which generalizes the algorithm
proposed in~\cite{completeinvariant}. Before dealing with multi-path loops, we
recall the notion of extended P-solvable loops in Section~\ref{sec:loopass} and
showcase the invariant ideal computation.

\subsection{Loops with assignments only}\label{sec:loopass}

In this section, we restrain ourselves to loops whose bodies are comprised of
rational function assignments only. This means that we restrict the valid
composite forms in a program of the form~\eqref{eq:program} to sequential
compositions and, for the moment, exclude inner loops and conditional branches.
We therefore consider a loop $L=B_1^*$ where $B_1$ is a single block containing
only variable assignments.

Each variable $v_i$ in a given loop of the form~\eqref{eq:program} gives rise to
a sequence $(v_i(n))_{n\in\N}$, where $n$ is the number of loop iterations. The
class of eligible loops is then defined based on the form of these sequences.
Let $r\fn{x}$ denote the \emph{falling factorial} defined as
${\prod_{i=0}^{n-1}r(x-i)}$ for any $r\in\K(x)$ and $n\in\N$.

\begin{definition}\label{def:extendedpsolvable}
  A loop with assignments only is called \emph{extended P-solvable} if each of
  its recursively changed variables determines a sequence of the form
  %
  \begin{equation}
    \label{eq:psolvable}
    v_i(n)=\sum_{j\in\set
      Z^\ell}p_{i,j}(n,\theta_1^n,\dots,\theta_k^n)((n+\zeta_1)^{\underline n})^{j_1}\cdots
    ((n+\zeta_\ell)^{\underline n})^{j_\ell}
  \end{equation}
  %
  where $k,\ell\in\set N$, the $p_{i,j}$ are polynomials in
$\set K(x)[y_1,\dots,y_k]$, not identically zero for finitely many
$j\in\set Z^\ell$, the $\theta_i$ are elements of $\set K$ and the $\zeta_i$ are
elements of $\set K\setminus\set Z^-$ with $\theta_i\neq\theta_j$ and
$\zeta_i-\zeta_j\notin\set Z$ for $i\neq j$.
\end{definition}

Definition~\ref{def:extendedpsolvable} extends the class of P-solvable loops in
the sense that each sequence induced by an extended P-solvable loop is the sum
of a finitely many hypergeometric sequences. This comprises C-finite sequences
as well as hypergeometric sequences and sums and Hadamard products of C-finite
and hypergeometric sequences. In contrast, P-solvable loops induce C-finite
sequences only. For details on C-finite and hypergeometric sequences we refer
to~\cite{kauers}.

Every sequence of the form~\eqref{eq:psolvable} can be written as
{ \setlength\abovedisplayskip{5pt plus 2pt minus 2pt}%
  \setlength\belowdisplayskip{5pt plus 2pt minus 2pt}%
\[ \smash{v^{(1)}_j}\kern-2pt = \smash{r_j(\fls{v}^{(0)}}\kern-2pt,\fls{\theta},(n + \fls{\zeta})^{\underline n}, n) \]
}where $r_j=p_i/q_i$ is a rational function, and $v^{(0)}$ and $v^{(1)}$ denote
the values of $v$ before and after the execution of the loop. Let
$I(\fls{\theta},\fls{\zeta})\ideal \K[y_0,\dots,y_{k+\ell}]$ be the ideal of all
algebraic dependencies in the variables $y_0,\dots,y_{k+\ell}$ between the
sequence $(n)_{n\in\set N}$, the exponential sequences
$\theta_1^n,\dots,\theta_k^n$ and the sequences
$(n+\zeta_1)^{\underline{n}},\dots,(n+\zeta_\ell)^{\underline{n}}$. Note that it
was shown in~\cite{issac2017} that this ideal is the same as the extension of
the ideal $I(\fls{\theta})\vartriangleleft\set K[y_0,\dots,y_k]$ of all algebraic dependencies between the $\theta^n\kern-2pt$ in
$\set K[y_0,\dots,y_k]$ to $\set K[y_0,\dots,y_{k+\ell}]$, as the factorial
sequences $(n+\zeta_i)^{\underline{n}}$ are algebraically independent from the
exponential sequences $\theta_i^n$. Now the following proposition states how the
invariant ideal of an extended P-solvable loop can be computed.

\begin{proposition}[\cite{issac2017}]
  \label{prop:psolvableideal}
  For an extended P-solvable loop with program variables $v_1,\dots,v_m$ the
  invariant ideal is given by
  \[\Biggl(\Biggl(\sum_{j=1}^m
        \bigl\langle\smash{q_j(\fls{v}^{(0)}\kern-2pt,\fls{y})v^{(1)}_j}\kern-2pt
          -  \smash{p_j(\fls{v}^{(0)}}\kern-2pt,\fls{y}) \bigr\rangle\Biggr)\mathbin{:}\langle \operatorname{lcm}(q_1,\dots,q_m)\rangle^\infty + I(\fls{\theta},\fls{\zeta})\Biggr) \cap \K[\fls{v}^{(1)}\kern-2pt, \fls{v}^{(0)}].\]
\end{proposition}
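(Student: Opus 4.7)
The plan is to reduce the invariance condition to the problem of finding algebraic relations among rational functions, and then to apply the theorem stated immediately before the proposition (which expresses such relation ideals via a saturation construction), together with the additional relations that hold among the auxiliary sequences $n$, $\theta_i^n$, and $(n+\zeta_i)^{\underline n}$.

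First, I would reformulate the invariance property in closed-form terms. By Definition~\ref{def:extendedpsolvable}, after $n$ iterations every program variable satisfies
\[
v^{(1)}_j = r_j\bigl(\fls{v}^{(0)},\fls{\theta}^{\,n},(n+\fls{\zeta})^{\underline n},n\bigr),
\qquad r_j=p_j/q_j.
\]
Hence a polynomial $P\in\K[\fls v^{(1)},\fls v^{(0)}]$ is a polynomial invariant if and only if $P(r_1,\dots,r_m,\fls v^{(0)})$ vanishes identically as a function of $n$, i.e.\ it is an algebraic relation among $r_1,\dots,r_m$ over $\K[\fls v^{(0)}]$ that is compatible with the algebraic relations among the sequences $n$, $\theta_i^n$ and $(n+\zeta_i)^{\underline n}$.

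Next, I would apply the preceding theorem on relation ideals. Introducing fresh indeterminates $\fls y$ for the sequences and treating them as free for the moment, that theorem gives
\[
\Biggl(\sum_{j=1}^m \bigl\langle q_j(\fls v^{(0)},\fls y)v_j^{(1)} - p_j(\fls v^{(0)},\fls y)\bigr\rangle\Biggr) \mathbin: \langle \operatorname{lcm}(q_1,\dots,q_m)\rangle^\infty
\]
as the ideal of all polynomials in $\K[\fls v^{(1)},\fls v^{(0)},\fls y]$ that vanish under the substitution $v_j^{(1)}\mapsto p_j/q_j$. To account for the fact that the $y_i$ actually stand for the sequences $n$, $\theta_i^n$, $(n+\zeta_i)^{\underline n}$, one then adds the ideal $I(\fls\theta,\fls\zeta)$ of all algebraic dependencies among those sequences, and finally contracts to $\K[\fls v^{(1)},\fls v^{(0)}]$ to eliminate the auxiliary variables. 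This yields exactly the ideal appearing in the proposition.

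The two containments then have to be checked. The $\supseteq$ direction follows because every generator of the stated ideal vanishes under the substitution corresponding to an arbitrary number of loop iterations: the generators $q_jv_j^{(1)}-p_j$ vanish by the closed-form identity, the generators of $I(\fls\theta,\fls\zeta)$ vanish by definition of the relations, and the saturation is justified since $\operatorname{lcm}(q_1,\dots,q_m)$, evaluated along the actual trajectory of the loop, is nonzero almost everywhere. The $\subseteq$ direction requires that no genuine invariant is lost by working with the modelling polynomials $q_jv_j^{(1)}-p_j$ rather than with the fractions directly, and this is precisely what the saturation step guarantees. The main obstacle is ensuring that the saturation and the addition of $I(\fls\theta,\fls\zeta)$ interact correctly: this is where the key observation recalled in the text just before the proposition is essential, namely that the factorial sequences $(n+\zeta_i)^{\underline n}$ are algebraically independent from the exponential sequences $\theta_i^n$, so that $I(\fls\theta,\fls\zeta)$ is simply the extension of $I(\fls\theta)$ and, in particular, contains no elements that would interfere with the denominators $q_j$ and collapse the saturation.
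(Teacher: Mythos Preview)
The paper does not prove this proposition: it is stated with a citation to~\cite{issac2017} and used as a black box, so there is no ``paper's own proof'' to compare against. Your sketch is a reasonable reconstruction of how such a result is established, and it correctly identifies the two ingredients --- the saturation theorem for relation ideals of rational functions and the ideal $I(\fls\theta,\fls\zeta)$ of algebraic dependencies among the auxiliary sequences --- and how they are combined via elimination.

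That said, your argument has a genuine soft spot in the $\subseteq$ direction. You need that \emph{every} polynomial relation among the actual trajectory values $\bigl(\fls v^{(0)},n,\fls\theta^{\,n},(n+\fls\zeta)^{\underline n}\bigr)$ for $n\in\N$ is already captured by $I(\fls\theta,\fls\zeta)$ together with the closed-form generators. Concretely, if $P(\fls v^{(1)},\fls v^{(0)})$ vanishes for all $n$, then after clearing denominators one obtains a polynomial in $\K[\fls v^{(0)},\fls y]$ that vanishes whenever $\fls y$ is specialised to the sequence values; you must argue that this polynomial lies in the extension of $I(\fls\theta,\fls\zeta)$ to $\K[\fls v^{(0)},\fls y]$. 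This is not automatic from the definition of $I(\fls\theta,\fls\zeta)$ as an ideal in $\K[\fls y]$ alone --- one needs that the $\fls v^{(0)}$ are algebraically independent over the sequences, which follows because the $\fls v^{(0)}$ are free initial values. Your final paragraph gestures at the interaction between saturation and $I(\fls\theta,\fls\zeta)$ but does not isolate this step; making it explicit would close the gap.
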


\begin{example}
\newcommand{\lc}[1]{^{(#1)}\kern-2pt}
\newcommand{\lcc}[1]{^{(#1)}}

  Consider the following loop with relevant program variables $a,b$ and $c$.
  \begin{equation*}
    \begin{tabular}{ll}
      \WHILE\ true\ \DO\\
      \quad $a\ass  2\cdot (n+1)(n+\frac{3}{2})\cdot a$\\ 
      \quad $b\ass  4\cdot (n+1)\cdot b$\\ 
      \quad $c\ass  \frac{1}{2}\cdot (n+\frac{3}{2})\cdot c$\\ 
      \quad $n\ass n+1$\\
      \OD\\
    \end{tabular}
  \end{equation*}
  The extracted recurrence relations admit the following system of closed form
  solutions:
{ \setlength\abovedisplayskip{0pt plus 2pt minus 2pt}%
  \setlength\belowdisplayskip{8pt plus 2pt minus 2pt}%
  \begin{align*}
    a_n &= 2^n\cdot a_0\cdot\fn{n}\cdot\fn{n+\frac{1}{2}},\\
    b_n &= 4^n\cdot b_0\cdot\fn{n},\\
    c_n &= 2^{-n}\cdot c_0\cdot\fn{n+\frac{1}{2}}.
  \end{align*}
  }Since every closed form solution is of the form~(\ref{eq:psolvable}) we have
  an extended P-solvable loop, and we can apply
  Proposition~\ref{prop:psolvableideal} to compute the invariant ideal:
  %
  \begin{align*}
      (\mathcal{I} + I(\fls{\theta},\fls{\zeta})) \cap \K[a\lc{1},b\lc{1},c\lc{1},a\lc{0},b\lc{0},c\lcc{0}]
    = \langle b\lc{1} \cdot c\lc{1} \cdot a\lc{0} - a\lc{1} \cdot b\lc{0} \cdot c\lcc{0} \rangle,
  \end{align*}
  %
  where
  %
  \begin{alignat*}2
    &\mathcal{I} &\;=\;& \langle a\lc{1} - y_1\cdot a\lc{0} \cdot z_1 z_2, b\lc{1} - y_2\cdot b\lc{0} \cdot z_1, c\lc{1} - y_3\cdot c\lc{0} \cdot z_2 \rangle,\\
    &I(\fls{\theta},\fls{\zeta}) &\;=\;& \langle y_1^2 - y_2, y_1 y_3 - 1, y_2 y_3 - y_1 \rangle.
  \end{alignat*}
  %
  The ideal $I(\fls{\theta},\fls{\zeta})$ in variables $y_1,y_2,y_3$ is the
  set of all algebraic dependencies among $2^n,4^n$ and $2^{-n}$, and
  $\mathcal{I}$ is generated by the closed form solutions where exponential and
  factorial sequences are replaced by variables $y_1,y_2,y_3$ and $z_1,z_2$.
\end{example}


\subsection{Algebraic Dependencies of Composed Rational Functions with Side Conditions}
\label{sec:dependencies}

In this section we give the prerequisites for proving termination of the
invariant generation method for multi-path loops (Section~\ref{sec:loopcond}).
The results of this section will allow us to proof termination by applying
Theorem~\ref{thm:finitedimension}.

Let $\fls{v}^{(i)} = v_1^{(i)},\dots,v_m^{(i)}$ and
$\fls{y}^{(i)} = y_1^{(i)},\dots,y_\ell^{(i)}$ for $i\in\set N$. We model the
situation in which the value of the $j$th loop variable after the execution of
the $i$th block in~\eqref{eq:program} is given by a rational function in the
$\fls{y}^{(i)}$ (which, for us, will be the exponential and factorial sequences
as well as the loop counter) and the \lq old\rq\ variable values
$\fls{v}^{(i-1)}$ and is assigned to $v_j^{(i)}$. Set
${\I_0=\sum_{j=1}^m\langle v^{(1)}_j\kern-2pt-v^{(0)}_j\rangle}$ and let
${I_i\ideal\K[\fls{y}^{(i)}]}$ for $i\in\set N^*$. Furthermore, let 
$q^{(i)}_j,p^{(i)}_j\kern-2pt\in\K[\fls{v}^{(i)}\kern-2pt,\fls{y}^{(i)}]$ such
that for fixed~$i$ there exists a $\fls{y}\in V(I_i)$ with
${p_j^{(i)}(\fls{v}^{(i)}\kern-2pt,\fls{y})/q_j^{(i)}(\fls{v}^{(i)}\kern-2pt,\fls{y})=\fls{v}^{(i)}_j\kern-2pt}$
for all $j$ and with $d_i:=\operatorname{lcm}(q^{(i)}_1,\dots,q^{(i)}_m)$ we have $d_i\notin
I_i$ and $d_i(\fls{v_i},\fls{y})=1$. Set {
  \setlength\abovedisplayskip{5pt plus 2pt minus 2pt}%
  \setlength\belowdisplayskip{5pt plus 2pt minus 2pt}%
\[J_i=\sum_{j=1}^{m}\langle
q^{(i)}_j(\fls{v}^{(i)}\kern-2pt,\fls{y}^{(i)})v^{(i+1)}_j\kern-2pt-p^{(i)}_j(\fls{v}^{(i)}\kern-2pt,\fls{y}^{(i)})\rangle.\]\vspace{-10pt}
}\begin{remark}
  The requirement for the existence of a point $\fls{y}$ in $V(I_i)$ such that
${p_j^{(i)}(\fls{v}^{(i)}\kern-2pt,\fls{y})/q_j^{(i)}(\fls{v}^{(i)}\kern-2pt,\fls{y})=\fls{v}^{(i)}_j\kern-2pt}$ for all
$j$ and $d_i(\fls{v_i},\fls{y})=1$ is always fulfilled in our context, as it is a formalization of the fact
that the execution of a loop $L^*$ also allows that it is
executed zero times, meaning the values of the program variables do not
change. 
\end{remark}

In order to develop some intuition about the following, consider a list of
consecutive loops $L_1;L_2;L_3;\dots$ where each of them is extended P-solvable.
Intuitively, the ideals $I_i$ then correspond to the ideal of algebraic
dependencies among the exponential and factorial sequences occurring in $L_i$,
whereas $J_i$ stands for the ideal generated by the closed form solutions of
$L_i$. Moreover, the variables $v_j^{(i+1)}$ correspond to the values of the
loop variables after the execution of the loop $L_i$. The following iterative
computation then allows us to generate the invariant ideal for
$L_1;L_2;L_3;\dots$
\[\I_i:=((J_i+\I_{i-1}+I_i)\mathbin{:}\langle
d_i\rangle^\infty)\cap \K[\fls{v}^{(i+1)}\kern-2pt,\fls{v}^{(0)}] \]
Now the remaining part of this section is devoted to proving properties of the
ideals $\I_i$ which will help us to show that there exists an index $k$
such that $\I_k = \I_{k'}$ for all $k' > k$ for a list of consecutive loops
$L_1;\dots;L_r;L_1;\dots;L_r;\dots$ with $r\in\N$.

First note that the ideal $\I_i$ can be rewritten as
\begin{align}
\label{eq:ideal}
  \I_{i}=\{p&\in\smash{\K[\fls{v}^{(i+1)}\kern-2pt,\fls{v}^{(0)}]}\mid \exists
              q\in \I_{i-1},k\in\set N: \notag\\ 
   & q\equiv \smash{d_i^kp(r^{(i)}_1(\fls{v}^{(i)}\kern-2pt,\fls{y}^{(i)}),\dots,r^{(i)}_m(\fls{v}^{(i)}\kern-2pt,\fls{y}^{(i)}),\fls{v}^{(0)})}
  \;\;(\operatorname{\mathbf{mod}}\; I_i)\}.
\end{align}
If $I_i$ is radical, an equation $\mathbf{mod}\; I_i$ is, informally speaking, the
same as substituting~$\fls{y}$ with values from $V(I_i)$, so~\eqref{eq:ideal} translates to
\begin{align}
\label{eq:radical}
  \I_{i}=\{p&\in\smash{\K[\fls{v}^{(i+1)}\kern-2pt,\fls{v}^{(0)}]} \mid \exists q\in \I_{i-1},k\in\set N: \notag\\
  &\forall\fls{y}\in V(I_i) :
  q=\smash{d_i^kp(r^{(i)}_1(\fls{v}^{(i)}\kern-2pt,\fls{y}),\dots,r^{(i)}_m(\fls{v}^{(i)}\kern-2pt,\fls{y}),\fls{v}^{(0)})}\}.
\end{align}

We now get the following subset relation between two consecutively computed ideals $\I_i$.

\begin{lemma}
\label{lem:radical} If $I_i$ is radical, then $\I_i \subseteq \I_{i-1}|_{\fls{v}^{(i-1)}\leftarrow \fls{v}^{(i)}}$.
\end{lemma}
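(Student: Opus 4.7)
The plan is to exploit the alternative characterization of $\I_i$ given in equation~(\ref{eq:radical}), which is valid precisely because $I_i$ is assumed to be radical: by Hilbert's Nullstellensatz (together with the assumption that $\K$ is algebraically closed), an equation modulo a radical ideal is equivalent to the corresponding pointwise identity on the variety. This turns the abstract congruence in~(\ref{eq:ideal}) into something we can probe by evaluating at a single well-chosen point $\fls{y}\in V(I_i)$.

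Concretely, I would pick an arbitrary $p\in\I_i$ and apply~(\ref{eq:radical}) to obtain some $q\in\I_{i-1}$ and $k\in\set N$ such that, for every $\fls{y}\in V(I_i)$, the identity
\[q = d_i^k(\fls{v}^{(i)}\kern-2pt,\fls{y})\cdot p(r^{(i)}_1(\fls{v}^{(i)}\kern-2pt,\fls{y}),\dots,r^{(i)}_m(\fls{v}^{(i)}\kern-2pt,\fls{y}),\fls{v}^{(0)})\]
holds as an identity of polynomials in $\K[\fls{v}^{(i)}\kern-2pt,\fls{v}^{(0)}]$. The decisive step is then to invoke the ``identity point'' $\fls{y}_0\in V(I_i)$ whose existence is built into the setup preceding the lemma and which is explained by the Remark as encoding the execution of the block zero times: it satisfies $r^{(i)}_j(\fls{v}^{(i)}\kern-2pt,\fls{y}_0)=v^{(i)}_j$ for every $j$ and $d_i(\fls{v}^{(i)}\kern-2pt,\fls{y}_0)=1$. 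Specializing at $\fls{y}_0$ collapses the right-hand side to $p(v^{(i)}_1,\dots,v^{(i)}_m,\fls{v}^{(0)})$, yielding $q = p|_{\fls{v}^{(i+1)}\leftarrow\fls{v}^{(i)}}$ and hence $p|_{\fls{v}^{(i+1)}\leftarrow\fls{v}^{(i)}}\in\I_{i-1}$. Since the relabeling of the ``after-state'' variables between $\fls{v}^{(i+1)}$ and $\fls{v}^{(i)}$ is a ring isomorphism between the two ambient polynomial rings, this is exactly the containment stated in the lemma.

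The main obstacle I expect is not the main line of argument, which is a direct point-evaluation, but rather the careful bookkeeping of the alternative characterization~(\ref{eq:radical}): one must be sure that, for a radical $I_i$, the passage from ``$\equiv\ (\operatorname{\mathbf{mod}}\ I_i)$'' in~(\ref{eq:ideal}) to ``$\forall\,\fls{y}\in V(I_i)$'' in~(\ref{eq:radical}) is genuinely an equivalence rather than just one direction, so that the witness pair $(q,k)$ extracted from~(\ref{eq:radical}) is the same as the one ensured by the definition of $\I_i$. Once this is pinned down, the proof is essentially the observation that the ``do nothing'' point $\fls{y}_0\in V(I_i)$ makes the new invariants degenerate into old invariants.
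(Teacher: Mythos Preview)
Your proposal is correct and follows essentially the same argument as the paper's own proof: both use the radical characterization~(\ref{eq:radical}) to extract a witness $q\in\I_{i-1}$ valid for all $\fls{y}\in V(I_i)$, then specialize at the distinguished ``identity'' point $\fls{y}_0\in V(I_i)$ (the one guaranteed by the setup and explained in the Remark) to collapse the composition to $p(\fls{v}^{(i)},\fls{v}^{(0)})$ and conclude that this lies in $\I_{i-1}$. Your version is somewhat more explicit about the role of the Nullstellensatz in justifying~(\ref{eq:radical}) and about the variable-relabeling isomorphism, but the underlying idea is identical.
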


\begin{proof}
  Let $p\in\I_i$. We have to show that there is an $r\in\I_{i-2}$ and a
  $k\in\set N$ such that 
  \[r\equiv
    d_{i-1}^k\smash{p(r_1^{(i-1)}(\fls{v}^{(i-1)}\kern-2pt,\fls{y}^{(i-1)}),\dots,r^{(i-1)}_m(\fls{v}^{(i-1)}\kern-2pt,\fls{y}^{(i-1)}),\fls{v}^{(0)})}\;\;(\operatorname{\mathbf{mod}}\;
    I_{i-1}).\]
  Since $I_i$ is radical, there is a $q\in \I_{i-1}$, a $z\in\set N$, and a $\fls{y}\in V(I_i)$
  with
  \[q=d_i^z\smash{p(r^{(i)}_1(\fls{v}^{(i)}\kern-2pt,\fls{y}),\dots,r^{(i)}_m(\fls{v}^{(i)}\kern-2pt,\fls{y}),\fls{v}^{(0)})
    = p(\fls{v}^{(i)}\kern-2pt,\fls{v}^{(0)})}.\]
  Then, by Equation~\eqref{eq:ideal} for $\I_{i-1}$, there is an $r\in \I_{i-2}$ with the desired property.
\end{proof}

For prime ideals, we get an additional property:
\begin{lemma}
\label{lem:prime}
  If $\I_{i-1}$ and $I_i$ are prime, then so is $\I_i$.
\end{lemma}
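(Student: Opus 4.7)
The plan is to apply the characterization of $\I_i$ given by equation~\eqref{eq:ideal}. Suppose $p_1p_2\in\I_i$ with $p_1,p_2\in\K[\fls{v}^{(i+1)},\fls{v}^{(0)}]$. By~\eqref{eq:ideal} there exist $q\in\I_{i-1}$ and $k\in\N$ with $d_i^k P_1P_2\equiv q\pmod{I_i}$, where $P_j:=p_j(r_1^{(i)},\dots,r_m^{(i)},\fls{v}^{(0)})$ is an element of the localization $R'[1/d_i]$, with $R':=\K[\fls{v}^{(i)},\fls{v}^{(0)},\fls{y}^{(i)}]$. Interpreted in that localization, the congruence reads $P_1P_2\in(\I_{i-1}+I_i)\cdot R'[1/d_i]$, so to extract $p_1\in\I_i$ or $p_2\in\I_i$ it is enough to show that $\I_{i-1}+I_i$ is a prime ideal of $R'$ and that $d_i\notin \I_{i-1}+I_i$.

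The key step is this primeness of $\I_{i-1}+I_i$, which I would establish via a tensor-product argument. Since $\I_{i-1}\subseteq\K[\fls{v}^{(i)},\fls{v}^{(0)}]$ and $I_i\subseteq\K[\fls{y}^{(i)}]$ involve disjoint variable sets, there is a natural isomorphism
\[R'/(\I_{i-1}+I_i)\;\cong\;(\K[\fls{v}^{(i)},\fls{v}^{(0)}]/\I_{i-1})\otimes_{\K}(\K[\fls{y}^{(i)}]/I_i).\]
By hypothesis both factors are integral $\K$-algebras, and since $\K$ is algebraically closed, the tensor product of two finitely generated integral $\K$-algebras over $\K$ is again an integral domain. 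Hence $\I_{i-1}+I_i$ is prime. Moreover $d_i\otimes 1$ is nonzero in the tensor product (as $d_i\notin I_i$), so $d_i\notin\I_{i-1}+I_i$ and its extension to $R'[1/d_i]$ is a proper prime ideal.

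Once these facts are in hand, the conclusion follows mechanically. Primeness in $R'[1/d_i]$ gives, say, $P_1\in(\I_{i-1}+I_i)\cdot R'[1/d_i]$; clearing denominators yields an $a\in\N$ and $q_1\in\I_{i-1}$ with $d_i^aP_1-q_1\in I_i\cdot R'[1/d_i]$. Because $I_i$ is prime and $d_i\notin I_i$, the contraction $\bigl(I_i\cdot R'[1/d_i]\bigr)\cap R'$ equals $I_i$, so $d_i^aP_1\equiv q_1\pmod{I_i}$ already in $R'$, which is exactly the condition of~\eqref{eq:ideal} witnessing $p_1\in\I_i$; the case $P_2\in(\I_{i-1}+I_i)\cdot R'[1/d_i]$ is symmetric. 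I expect the chief obstacle to be the tensor-product step: the claim that $(A/\I_{i-1})\otimes_{\K}(B/I_i)$ remains a domain genuinely uses algebraic closedness of $\K$ (counterexamples exist over non-algebraically-closed fields), and it is precisely here that the standing assumption on $\K$ from Section~\ref{sec:programming-model} is essential. The remaining manipulations are routine bookkeeping with~\eqref{eq:ideal}.
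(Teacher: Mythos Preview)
Your argument is correct and follows the same overall strategy as the paper's proof: take a product in $\I_i$, pass to the substituted expressions $P_1,P_2$, and use primeness to split the factorization. The paper does this by direct substitution and at the crucial step simply asserts ``because $\I_{i-1}$ is prime, it follows without loss of generality that $d_i^{k_1}a|_r\in\I_{i-1}$''; you correctly identify that what is actually needed is primeness of the sum $\I_{i-1}+I_i$ in the larger ring $R'$, and you supply the tensor-product justification for it. That is the right way to close the argument, and your remark that algebraic closedness of $\K$ is genuinely used at this point (and only at this point) is on target.

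Two small imprecisions are worth tightening. First, you write ``$d_i\otimes 1$ is nonzero in the tensor product (as $d_i\notin I_i$)'', but $d_i\in\K[\fls{v}^{(i)},\fls{y}^{(i)}]$ does not sit in a single tensor factor, so $d_i\notin I_i$ alone does not directly give $d_i\notin\I_{i-1}+I_i$. The conclusion is still correct: by the standing hypothesis there is $\fls{y}\in V(I_i)$ with $d_i(\fls{v}^{(i)},\fls{y})=1$ identically, and specializing $\fls{y}^{(i)}\mapsto\fls{y}$ gives a homomorphism $R'/(\I_{i-1}+I_i)\to\K[\fls{v}^{(i)},\fls{v}^{(0)}]/\I_{i-1}$ sending $d_i$ to $1$. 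Second, in your last paragraph the element $q_1$ you extract lies a~priori in the extension $\I_{i-1}R'$, not in $\I_{i-1}$ itself, so invoking~\eqref{eq:ideal} verbatim is delicate. This is harmless: from $P_1\in(\I_{i-1}+I_i)R'[1/d_i]$ together with $p_1-P_1\in J_iR''[1/d_i]$ you get $d_i^{a}p_1\in (J_i+\I_{i-1}+I_i)R''$ for some $a$, which is exactly the defining condition $p_1\in\I_i$, with no need to locate $q_1$ in the small ring.
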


\begin{proof}
  Let $a\cdot b\in \I_i$ and denote by $a|_r$ and $b|_r$ the rational functions where each $v_j^{(i+1)}$ is
  substituted by $r^{(i)}_j$  in $a,b$ respectively. Then there is a $q\in
  \I_{i-1}$ and a $k=k_1+k_2\in\set N$ with $d_i^{k_1}a|_r,d_i^{k_2}b|_r\in\K[\fls{v}^{(i+1)}\kern-2pt,\fls{v}^{(0)}]$
  \[q\equiv d_i^k(a\cdot b)|_{r} \equiv d_i^{k_1}a|_{r}\cdot d_i^{k_2}b|_{r}
    \;\;(\operatorname{\mathbf{mod}}\; I_i)\]
  If $d_i^ka|_r$ is zero modulo $I_i$, then $a$ is an element of $\I_i$, as $0\in\I_{i-1}$. The same
  argument holds for $b$. Suppose that
  $d_i^{k_1}a|_r,d_i^{k_2}b|_r\not\equiv 0\;(\operatorname{\mathbf{mod}}\; I_i)$.  Then, since
  $I_i$ is prime, $\K[\fls{y}^{(i)}]/I_i$ is an integral domain, and so it follows that
  $q\not\equiv 0\;(\operatorname{\mathbf{mod}}\; I_i)$. Now, because
  $\I_{i-1}$ is prime, it follows without loss of generality that
  $d_i^{k_1}a|_r\in\I_{i-1}$, from which we get $a\in\I_i$.
\end{proof}

We now use Lemmas~\ref{lem:radical} and~\ref{lem:prime} to give details about
the minimal decomposition of $\I_i$.

\begin{proposition}
\label{prop:mindec}
  For fixed $i_0\in\set N$, let all $I_i$, $0\leq i \leq i_0$ be radical and let
  $\I_{i_0}=\smash{\bigcap_{k=0}^{n}} P_k$ be the minimal decomposition of
  $\I_{i_0}$. Then
  \begin{enumerate}
  \item for each $k$ there exist prime ideals $I_{k,1},I_{k,2},\dots$ such that $P_k$
    is equal to a $\I_{k,{i_0}}$ constructed as above with $J_1,\dots,J_{i_0}$ and
    $I_{k,1},\dots,I_{k,{i_0}}$. 
  \item if $I_{i_0+1}$ is radical and
    $\I_{{i_0}+1}=\smash{\bigcap_{j=0}^{n'}} P'_j$ is the minimal decomposition
    of $\I_{i+1},$ then, for each $P'_j$ there exists a $P_k$ such that
    $P'_j\subseteq P_k|_{\fls{v}^{(i_0)}\leftarrow \fls{v}^{(i_0+1)}}$.
  \end{enumerate}
\end{proposition}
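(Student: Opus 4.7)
The plan is to establish both parts by induction on $i_0$, using Lemma~\ref{lem:prime} to certify that the candidate components are prime and Lemma~\ref{lem:radical} to relate successive decompositions. For the base case $i_0=0$, the ideal $\I_0=\sum_{j=1}^m\langle v_j^{(1)}-v_j^{(0)}\rangle$ is the kernel of the $\K$-algebra homomorphism $v_j^{(1)}\mapsto v_j^{(0)}$ and is therefore prime, so its minimal decomposition is itself, vacuously realized with no ideals $I_{k,i}$ to specify.

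For the inductive step of Part~1, write $\I_{i_0-1}=\bigcap_s P_s$ with each $P_s$ an iterate of the construction built from primes $I_{s,1},\dots,I_{s,i_0-1}$, and decompose the radical ideal $I_{i_0}=\bigcap_t Q_t$ into minimal primes. For every pair $(s,t)$ set
\[\I_{(s,t),i_0} := \bigl((J_{i_0}+P_s+Q_t)\mathbin{:}\langle d_{i_0}\rangle^\infty\bigr)\cap\K[\fls{v}^{(i_0+1)},\fls{v}^{(0)}],\]
which is prime by Lemma~\ref{lem:prime}. I then aim to prove the equality $\I_{i_0}=\bigcap_{s,t}\I_{(s,t),i_0}$. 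The inclusion $\subseteq$ is immediate from the monotonicity of $J_i+\,\cdot\,+\,\cdot$, saturation, and contraction. For the reverse inclusion, I would use the variety characterization~\eqref{eq:radical}: an element $p$ of the right-hand side comes with witnesses $q_{s,t}\in P_s$ and exponents $k_{s,t}\in\N$ such that $d_{i_0}^{k_{s,t}}p|_r\equiv q_{s,t}\pmod{Q_t}$, and these are then to be assembled into a single pair $(q,k)$ with $q\in\bigcap_s P_s=\I_{i_0-1}$ and $d_{i_0}^k p|_r\equiv q\pmod{I_{i_0}}$, giving $p\in\I_{i_0}$. Since the minimal primes of an intersection of primes lie among those primes, each $P_k$ in the minimal decomposition of $\I_{i_0}$ equals some $\I_{(s,t),i_0}$, which is what Part~1 asserts.

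For Part~2, apply Part~1 at level $i_0+1$ to write each $P'_j$ as some $\I_{(s,t),i_0+1}$ produced from primes $I_{(s,t),1},\dots,I_{(s,t),i_0+1}$. Truncating to the first $i_0$ primes of that list yields a component $P_k=\I_{(s),i_0}$ of the decomposition of $\I_{i_0}$. Because the prime $I_{(s,t),i_0+1}$ is radical, Lemma~\ref{lem:radical} applied to the final step of the construction produces exactly the inclusion
\[P'_j\subseteq P_k\big|_{\fls{v}^{(i_0)}\leftarrow\fls{v}^{(i_0+1)}}.\]

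The hard part will be the reverse containment in Part~1, where the component-wise witnesses $(q_{s,t},k_{s,t})$, defined modulo different primes $Q_t$, must be merged into a single witness modulo $I_{i_0}=\bigcap_t Q_t$. I expect this to hinge on three ingredients: the radicality of $I_{i_0}$ (so that a congruence modulo $I_{i_0}$ is equivalent to the simultaneous congruences modulo every $Q_t$, via the Chinese-Remainder-style identification $\K[\fls{y}^{(i_0)}]/I_{i_0}\hookrightarrow\prod_t\K[\fls{y}^{(i_0)}]/Q_t$), the non-vanishing assumption $d_{i_0}\notin I_{i_0}$ with $d_{i_0}(\fls{v},\fls{y})=1$ at some $\fls{y}\in V(I_{i_0})$ (so that $d_{i_0}$ is nonzero modulo each $Q_t$ and the saturation can be harmonized across components), and a clearing-of-denominators step to normalize the local exponents $k_{s,t}$ to a common $k$.
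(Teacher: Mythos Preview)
Your proposal follows essentially the same route as the paper: induction on $i_0$, splitting both $\I_{i_0-1}$ and the radical $I_{i_0}$ into their minimal primes, forming the ideals $\I_{(s,t),i_0}$, invoking Lemma~\ref{lem:prime} for their primality, and deducing Part~2 from Lemma~\ref{lem:radical} applied to the last step of the construction. The paper records the key identity $\I_{i_0}=\bigcap_{s,t}\I_{(s,t),i_0}$ as a displayed chain of equalities without further comment; you single out the inclusion $\supseteq$ as the nontrivial direction and sketch the ingredients (radicality of $I_{i_0}$ for the CRT-type embedding, the special point with $d_{i_0}=1$, normalization of exponents) needed to assemble a global witness from the local ones---which is a fair assessment of where the work lies, even if the assembly is not fully carried out.
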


\begin{proof}
  We prove 1.\ by induction. For $i_0=0$, there is nothing to show. Now assume the
  claim holds for some $i_0\in\set N$ and let
  $I_{i_0+1}=\smash{\bigcap_{j=0}^w} Q_j$ be the minimal decomposition of
  $I_{i_0+1}$. With this we get
\begin{align*}
  \I_{i_0+1} & =  (J_{i_0+1}+\I_{i_0}+I_{i_0+1})\mathbin{:}\langle
d_{i_0+1}\rangle^\infty\cap \K[\fls{v}^{(i_0+1)}\kern-2pt,\fls{v}^{(0)}]\\[6pt]
  & =  \left(\bigcap_{k=0}^n J_{i_0+1} + P_k + \bigcap_{j=0}^w Q_j\right)\mathbin{:}\langle
d_{i_0+1}\rangle^\infty\cap \K[\fls{v}^{(i+1)}\kern-2pt,\fls{v}^{(0)}]\\
  & =  \raisebox{-1.5pt}{$\Biggl($}\bigcap_{k=0}^n\bigcap_{j=0}^w \underbrace{(J_{i_0+1}+P_k+Q_j)\mathbin{:}\langle
d_{i_0+1}\rangle^\infty\cap
    \K[\fls{v}^{(i_0+1)}\kern-2pt,\fls{v}^{(0)}]}_{\tilde{I}_{k,j}}\raisebox{-1.5pt}{$\Biggr)$}.
\end{align*}
By the induction hypothesis, each $P_k$ admits a construction as above, and thus
so does $\tilde{I}_{k,j}$. By Lemma~\ref{lem:prime}, $\tilde{I}_{k,j}$ is prime.
This shows 1. The second claim then follows from the fact that the prime ideals
in the minimal decomposition of $\I_{i_0+1}$ are obtained from the $P_k$ via
$J_{i_0+1}$ and $Q_j$. Since the $Q_j$ are prime, they are also radical, and the
claim follows from Lemma~\ref{lem:radical}.
\end{proof}


\subsection{Loops with conditional branches}\label{sec:loopcond}

In this section, we extend the results of Section~\ref{sec:loopass} to loops
with conditional branches. Without loss of generality, we define our algorithm
for a loop of the form
\[\WHILE\ \dots\ \DO\ L_1;L_2;\dots;L_r\ \OD\] where $L_i = B_i^*$ and $B_i$ is
a block containing variable assignments only.

Let $I(\fls{\theta}_i,\fls{\zeta}_i)$ denote the ideal of all algebraic
dependencies as described in Section~\ref{sec:loopass} for a inner loop $L_i$.
As every inner loop provides its own loop counter, we have that the exponential
and factorial sequences of distinct inner loops are algebraically independent.
Therefore $I(\fls{\theta},\fls{\zeta}) := \sum_{i=0}^r
I(\fls{\theta}_i,\fls{\zeta}_i)$ denotes the set of all algebraic dependencies
between exponential and factorial sequences among the inner loops
$L_1,\dots,L_r$. 


Consider loop bodies $B_1,\dots,B_r$ with common loop variables $v_1,\dots,v_m$.
Suppose the closed form of $v_j$ in the $i$th loop body is given by a rational function
in $m+k+\ell+1$ variables:
\[\smash{v^{(i+1)}_j}\kern-2pt =  \smash{r^{(i)}_j(\fls{v}^{(i)}}\kern-2pt,\fls{\theta}^n,(n+\fls{\zeta})^{\underline{n}},n),\]
where $\smash{v^{(i)}_j}\kern-2pt$ and $\smash{v^{(i+1)}_j}\kern-2pt$ are
variables for the value of $v_j$ before and after the execution of the loop
body. Then we can compute the ideal of all polynomial invariants of the 
non-deterministic program $(B_1^*;B_2^*;\dots;B_r^*)^*$ with 
Algorithm~\ref{alg:polyinv-mutlipath}.

\begin{algorithm}
  \caption{Invariant generation via fixed point computation}
  \label{alg:polyinv-mutlipath}
  \begin{algorithmic}[1]
    \vskip.5\baselineskip
    \Require{Loop bodies $B_1,\dots,B_r$ as described.}
    \Ensure{The ideal of all polynomial invariants of $(B_1^*;B_2^*;\dots;B_r^*)^*$.}
    \algrule
    \State Compute $I:=I(\fls{\theta},\fls{\zeta})$ as described above \label{line:algdep}
    \State $\I_{old}=\{0\}$, $\I_{new}=\smash{\sum_{j=1}^m\langle v^{(1)}_j-v^{(0)}_i\rangle}$, $j=0$
    \While{$\I_{old}|_{\fls{v}^{((j-1)\cdot r + 1)}\leftarrow \fls{v}^{(j\cdot r+1)}}\neq \I_{new}$~\algorithmicand~$\I_{new}\neq \{0\}$}\label{line:loop}
      \State $\I_{old} \gets \I_{new}$, $j \gets j+1$
      \For{$i=1,\dots,r$}
        \State $\I_{new} \gets (J_{i\cdot j}+\I_{old}+I)\cap \K[\fls{v}^{(i\cdot j+1)},\fls{v}^{(0)}]$
      \EndFor
    \EndWhile
    \State \Return $\I_{new}$
  \end{algorithmic}
\end{algorithm}
\goodbreak
\begin{lemma}
\label{lem:seq}
  $I(\fls{\theta},\fls{\zeta})$ is a radical ideal.
\end{lemma}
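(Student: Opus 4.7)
The plan is to prove radicality by realizing the quotient ring $\K[\fls{y},\fls{z}]/I(\fls{\theta},\fls{\zeta})$ as a subring of a ring with no nonzero nilpotents. First I would handle each single-loop summand $I(\fls{\theta}_i,\fls{\zeta}_i)$ separately, then piece them together using the disjointness of their variable sets.

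For a single inner loop $L_i$, the ideal $I(\fls{\theta}_i,\fls{\zeta}_i)$ is by construction the kernel of the $\K$-algebra homomorphism $\varphi_i$ that evaluates each of its variables on the corresponding exponential and factorial sequences indexed by the loop counter $n_i$. Because these sequences take values in the field $\K$, the image of $\varphi_i$ lies inside the sequence ring $\K^{\set N}$, which is a product of copies of $\K$ and therefore reduced. Hence $I(\fls{\theta}_i,\fls{\zeta}_i) = \ker \varphi_i$ is a radical ideal, and the quotient $R_i := \K[\fls{y}^{(i)}\!,\fls{z}^{(i)}]/I(\fls{\theta}_i,\fls{\zeta}_i)$ embeds as a subring of $\K^{\set N}$.

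For the full sum $I(\fls{\theta},\fls{\zeta}) = \sum_{i=1}^r I(\fls{\theta}_i,\fls{\zeta}_i)$, the variable sets of distinct summands are disjoint, so the quotient ring factors as the tensor product $R_1 \otimes_\K \cdots \otimes_\K R_r$. Each $R_i$ is a reduced $\K$-algebra by the previous step, and since $\K$ has characteristic zero (hence is perfect), tensor products of reduced $\K$-algebras are again reduced. Therefore $\K[\fls{y},\fls{z}]/I(\fls{\theta},\fls{\zeta})$ has no nonzero nilpotents, which is exactly the statement that $I(\fls{\theta},\fls{\zeta})$ is radical.

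The main obstacle is the passage from the single-loop case to the multi-loop sum. A more hands-on alternative would be to identify $I(\fls{\theta},\fls{\zeta})$ directly with the kernel of the joint evaluation map $\varphi\colon \K[\fls{y},\fls{z}] \to \K^{\set N^r}$ with independent counters $n_1,\dots,n_r$: the containment $I(\fls{\theta},\fls{\zeta}) \subseteq \ker\varphi$ is immediate from the definition of the sum, and the reverse containment is precisely the algebraic independence between sequences of distinct inner loops already recorded at the beginning of Section~\ref{sec:loopcond}. Once that equality is in place, radicality follows at once from $\K^{\set N^r}$ being a product of fields.
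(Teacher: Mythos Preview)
Your argument is correct and shares its core with the paper's: both recognise that $I(\fls{\theta},\fls{\zeta})$ arises as the kernel of an evaluation map into a ring of sequences, and that such a ring has no nonzero nilpotents. The paper states this in sequence-theoretic language---the Hadamard square of a C-finite sequence vanishes only if the sequence itself does---and then passes from $I(\fls{\theta})$ to $I(\fls{\theta},\fls{\zeta})$ by noting the latter is just the extension of the former to the larger polynomial ring (the factorial sequences being algebraically independent from the rest). You say the same thing in commutative-algebra language ($\K^{\set N}$ is a product of fields, hence reduced), and you are more explicit than the paper about the multi-loop sum $\sum_i I(\fls{\theta}_i,\fls{\zeta}_i)$, which you handle either via tensor products of reduced algebras over a perfect field or, more directly, via the joint evaluation into $\K^{\set N^r}$. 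Both packagings are valid; yours is more self-contained and makes the passage to several inner loops transparent, while the paper's leans on the C-finite machinery already in place from the preceding section.
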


\begin{proof}
  The elements of $I(\fls{\theta})$ represent C-finite sequences,
  i.e.\ sequences of the form
  \[f_1(n)\theta_1^n+\dots+f_k^n\theta_k^n,\]
  for univariate polynomials $f_1,\dots,f_k\in\set K[y_0]$ and pairwise distinct
  $\theta_1,\dots,\theta_k\in\set K$.
  The claim is then proven by the fact that the Hadamard-product
  $a^2(n,a(0))$ of a C-finite sequence $a(n,a(0))$ with itself is zero if and
  only if $a(n,a(0))$ is zero, and $I(\fls{\theta},\fls{\zeta})$ is the
  extension of $I(\fls{\theta})$ to $\set K[y_0,\dots,y_{k+\ell}]$.
\end{proof}

\begin{theorem}
  \label{thm:algo}
  Algorithm~\ref{alg:polyinv-mutlipath} is correct and terminates.
\end{theorem}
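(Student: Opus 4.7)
The plan is to prove correctness and termination separately, exploiting the structural results established in Sections~\ref{sec:loopass} and~\ref{sec:dependencies}.

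For correctness, I would proceed by induction on the counter $j$ of the outer \textbf{while} loop. Each inner loop $L_i = B_i^*$ is, by construction, an extended P-solvable single-path loop, so Proposition~\ref{prop:psolvableideal} applies to compute its invariant ideal from the closed forms of its program variables. The induction hypothesis is that, after $j$ passes through the inner \textbf{for} loop, $\I_{new}$ equals the ideal of all algebraic dependencies between $\fls{v}^{(0)}$ and $\fls{v}^{(j\cdot r+1)}$ that hold after any execution path of the form $B_1^{n_1};\dots;B_r^{n_r};\dots;B_1^{n_{(j-1)r+1}};\dots;B_r^{n_{jr}}$ with non-negative exponents. The base case is the initial value $\sum_{j=1}^m\langle v_j^{(1)}-v_j^{(0)}\rangle$, which encodes the empty execution. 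The inductive step appends the effect of one more block $L_i$ by forming $(J_{i\cdot j}+\I_{old}+I)\cap\K[\fls{v}^{(i\cdot j+1)},\fls{v}^{(0)}]$; this is exactly the operation validated by Proposition~\ref{prop:psolvableideal} together with parts (2)--(4) of Theorem~\ref{thm:gb}, where $I=I(\fls{\theta},\fls{\zeta})$ supplies the algebraic relations among the exponential and factorial sequences of all inner loops. Finally, once the fixed-point test on Line~\ref{line:loop} succeeds, by the inductive characterisation no execution of additional blocks can remove any element from $\I_{new}$, and therefore $\I_{new}$ is exactly the polynomial invariant ideal of the whole multi-path loop $(B_1^*;\dots;B_r^*)^*$.

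For termination, I would invoke Proposition~\ref{prop:mindec} together with Lemma~\ref{lem:seq}. Lemma~\ref{lem:seq} ensures that $I(\fls{\theta},\fls{\zeta})$ is radical, which is precisely the hypothesis needed in Proposition~\ref{prop:mindec}. Consider the sequence of ideals $\I_0,\I_1,\I_2,\dots$ produced by the algorithm (after suitable renaming of the $\fls{v}^{(\cdot)}$ variables so that they all live in the same polynomial ring $\K[\fls{v}^{(1)},\fls{v}^{(0)}]$). By part~(2) of Proposition~\ref{prop:mindec}, every prime in the minimal decomposition of $\I_{j+1}$ is contained in some prime of the minimal decomposition of $\I_j$. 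Tracking the resulting chains of prime ideals, each such chain must stabilise after finitely many steps, because by Theorem~\ref{thm:finitedimension} the Krull dimension of $\K[\fls{v}^{(1)},\fls{v}^{(0)}]$ equals $2m$, bounding the length of any chain of primes. Combined with the fact that each outer iteration of the algorithm performs $r$ inner block compositions, one obtains the announced $O(m\cdot r)$ bound on the number of iterations before $\I_{old}|_{\cdots}=\I_{new}$, at which point the loop exits.

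The main obstacle I anticipate is the transition from "each individual prime component stabilises" to "the whole ideal stabilises", because Proposition~\ref{prop:mindec}(1) shows that the minimal prime decomposition of $\I_{j+1}$ can in principle contain more components than that of $\I_j$ (one for every pair $(P_k,Q_j)$). The argument has to rule out that new, genuinely distinct primes can keep being introduced indefinitely; this is done by noting that all $I_{i,j}$ arising in the construction come from a fixed finite decomposition of $I(\fls{\theta},\fls{\zeta})$, so the total pool of admissible primes is bounded, and descent inside $\K[\fls{v}^{(1)},\fls{v}^{(0)}]$ is bounded by Theorem~\ref{thm:finitedimension}. A secondary, more bookkeeping-style subtlety is the renaming $\I_{old}|_{\fls{v}^{((j-1)\cdot r+1)}\leftarrow\fls{v}^{(j\cdot r+1)}}$ used in the stopping test, which must be shown to correctly identify the invariant-ideal fixed point rather than a spurious one.
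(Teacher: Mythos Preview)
Your proposal follows essentially the same architecture as the paper's proof: correctness by showing inductively that $\I_i$ is the invariant ideal of the partial composition $(B_1^*;\dots;B_r^*)^{\lfloor i/r\rfloor};B_1^*;\dots;B_{i\bmod r}^*$, and termination via Lemma~\ref{lem:seq}, Proposition~\ref{prop:mindec}, and a height/dimension bound on chains of primes.

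Two points where your argument deviates from the paper and would need adjustment. First, the relevant bound is the height $m$ of the prime ideal $\I_0=\sum_j\langle v_j^{(1)}-v_j^{(0)}\rangle$, not the Krull dimension $2m$ of the ambient ring $\K[\fls{v}^{(1)},\fls{v}^{(0)}]$; since every prime in every later decomposition sits below $\I_0$ (by Proposition~\ref{prop:mindec}(2)), their heights are bounded by $m$, which is what yields the $m\cdot r$ bound rather than $2m\cdot r$. Second, and more substantively, your resolution of the branching obstacle via a ``bounded pool of admissible primes'' does not work as stated: the primes $P_k$ in the decomposition of $\I_i$ are indexed by \emph{sequences} $(I_{k,1},\dots,I_{k,i})$ of primes from the decomposition of $I(\fls{\theta},\fls{\zeta})$, so their number can grow with $i$. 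The paper's actual argument (spelled out in Corollary~\ref{cor:bound}) instead exploits the \emph{periodicity} of the construction: since the same $J_1,\dots,J_r$ are applied in every outer iteration, a prime that survives one full cycle survives all subsequent cycles. This lets one extract a single strictly descending chain $P_0\supset P_1\supset\cdots$ with one $P_k$ per outer iteration, whose length is then bounded by $\hg(\I_0)=m$.
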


\begin{proof} 
  The algorithm iteratively computes the ideals $\I_1,\I_2,\dots$ as in
  Section~\ref{sec:dependencies}, so we will refer to $I_{old}$ and $I_{new}$ as
  $\I_i$ and $\I_{i+1}$.

  \textit{Termination:} $\I_0$ is a prime ideal of height $m$. Suppose after an
  execution of the outer loop, the condition $\I_{i}|_{\fls{v}^{(i)}\leftarrow
  \fls{v}^{(i+1)}}\neq \I_{i+1}$ holds. As $I(\fls{\theta},\fls{\zeta})$ is
  radical by Lemma~\ref{lem:seq}, we then get
  $\I_{i+1}\subset\I_{i}|_{\fls{v}^{(i)}\leftarrow \fls{v}^{(i+1)}}$ by
  Lemma~\ref{lem:radical}. Thus there is a
  ${p\in\K[\fls{v}^{(i+1)}\kern-2pt,\fls{v}^{(0)}]}$ with
  $p\in\I_{i}|_{\fls{v}^{(i)}\leftarrow \fls{v}^{(i+1)}}$ and $p\notin\I_{i+1}$.
  Then, by Proposition~\ref{prop:mindec}, all prime ideals $P_k$ in the minimal
  decomposition of $\I_{i+1}$ are have to be subsets of the prime ideals in the
  minimal decomposition of $\I_{i}|_{\fls{v}^{(i)}\leftarrow \fls{v}^{(i+1)}}$,
  where at least one of the subset relations is proper. Since $p\notin
  \I_{i+1}$, the height of at least one $P_k$ has to be reduced. The height of
  each prime ideal is bounded by the height of $\I_0$.

  \textit{Correctness:} Let $i\in\N$ be fixed and denote by
  $I(B;i)\ideal\K[\fls{v}^{(i+1)},\fls{v}^{(0)}]$ the ideal of all polynomial
  invariants for the non-deterministic program
  \[(B_1^*;\dots;B_r^*)^{\sfrac{i}{r}};B_1^*;\dots;B_{i\;\mathrel{\operatorname{\mathbf{rem}}}\;r}^*.\] It suffices to
  show that $\I_i$ is equal to $I(B;i)$. In fact, after $i_0$ iterations with
  $\I_{i_0}=\I_{i_0+1}=\I_{i_0+2}=\dots$, it follows that $\I_{i_0}$ is the
  ideal of polynomial invariants for $(B_1^*;\dots;B_r^*)^*\kern-2pt.$ Let $p\in
  I(B;i)$. The value of the program variable $v_j$ in the program
  $B_1^*;\dots;B_{i\;\mathrel{\operatorname{\mathbf{rem}}}\;r}^*$ is given as the value
  of a composition of the closed forms of each $B_k$:
  %
  \[v_j=
  \smash{p^{(i)}_j\bigg(p^{(i-1)}\Big(\dots\big(p^{(1)}(\fls{v}^{(0)}\kern-2pt,\fls{s}_{n_1}),\dots\big),\fls{s}_{n_{i-1}}\Big),\fls{s}_{n_i}\bigg)},\]
  with $\fls{s}_n = n,\fls{\theta}^n,(n + \fls{\zeta})^{\underline{n}}$ and
  $n_1,\dots,n_i\in\set N$. The correctness then follows from the fact that that
  $\I_i$ is the ideal of all such compositions under the side condition that
  $(\fls{\theta}^n,(n+\fls{\zeta})^{\underline{n}},n)\in V(I(\fls{\theta},\fls{\zeta}))$ for
  any $n\in\set N$.
\end{proof}

Revisiting the subset relations of the prime ideals in the minimal decomposition
of $\I_0,\I_1,\dots$ gives an upper bound for the necessary number of iterations
in the algorithm.

\begin{corollary}\label{cor:bound}
  Algorithm~\ref{alg:polyinv-mutlipath} terminates after at most $m$ iterations
  of the {while-loop} at line~\ref{line:loop}.
\end{corollary}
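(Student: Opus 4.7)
The plan is to track the maximum height $h_j$ of prime ideals appearing in the minimal decomposition of $\I_{jr}$ after $j$ outer iterations of the while-loop, and show that $h_j$ strictly decreases at each non-terminating iteration. Combined with the initial value $h_0=m$ and the lower bound $h_j\geq 0$, this immediately yields the bound of $m$ outer iterations.

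First I would establish that $h_0=m$. The ideal $\I_0=\sum_{j=1}^m\langle v^{(1)}_j-v^{(0)}_j\rangle$ is generated by a regular sequence of length $m$ in $\K[\fls{v}^{(1)},\fls{v}^{(0)}]$, with quotient isomorphic to the integral domain $\K[\fls{v}^{(0)}]$; hence $\I_0$ is prime of height exactly $m$ by a standard codimension count together with Theorem~\ref{thm:finitedimension}. Lemma~\ref{lem:seq} then guarantees that $I(\fls{\theta},\fls{\zeta})$ is radical in every iteration, so Lemma~\ref{lem:radical} applies: whenever the while-loop guard does not force termination, we obtain the strict inclusion $\I_{new}\subsetneq\I_{old}|_{\ldots}$. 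By Proposition~\ref{prop:mindec}, every prime in the minimal decomposition of $\I_{new}$ is then contained in some prime of the decomposition of $\I_{old}|_{\ldots}$, with at least one containment proper; since a proper containment of primes in a polynomial ring strictly reduces height, at least one prime's height strictly drops.

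The main obstacle will be to promote ``some prime's height drops'' to ``the maximum height strictly drops''. I plan to handle this by an induction on $j$ that invokes Proposition~\ref{prop:mindec} part 1, which provides an explicit recursive description of each prime in the decomposition of $\I_{jr}$. Concretely, if some prime $P^*$ in $\I_{(j+1)r}$ were to persist at the maximum height $m-j$, then the recursive description would force $P^*$ to be a simultaneous fixed point of all $r$ inner updates of the outer iteration; combined with the subset information for the remaining primes from Proposition~\ref{prop:mindec} part 2, this would force $\I_{(j+1)r}=\I_{jr}|_{\ldots}$, contradicting the assumption that the while-loop did not terminate. Iterating gives that after $j$ outer iterations every prime in the decomposition has height at most $m-j$.

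Finally, since $h_j$ is a strictly decreasing sequence in $\{0,\dots,m\}$, it reaches $0$ within at most $m$ outer iterations. At that point every prime in the decomposition has height zero in the integral domain $\K[\fls{v}^{(jr+1)},\fls{v}^{(0)}]$, forcing $\I_{jr}=\{0\}$ and triggering the second clause $\I_{new}\neq\{0\}$ of the while-loop guard, which terminates Algorithm~\ref{alg:polyinv-mutlipath}.
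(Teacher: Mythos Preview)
Your approach has a genuine gap at the ``main obstacle'' step: the maximum height $h_j$ need not strictly decrease at each non-terminating outer iteration, and the fix you sketch does not close this. Consider the scenario where $\I_{jr}$ has two incomparable minimal primes $Q_1,Q_2$, both of the maximal height $h_j$. After one further outer iteration, $Q_1$ may well be a fixed point of all $r$ inner updates (so it persists in the minimal decomposition of $\I_{(j+1)r}$), while $Q_2$ is replaced by a strictly smaller prime $P_2\subsetneq Q_2$. Then $\I_{(j+1)r}=Q_1\cap P_2\subsetneq Q_1\cap Q_2=\I_{jr}$, so the while-guard is not satisfied and the loop continues; yet $h_{j+1}=h_j$ because $Q_1$ survives at the old maximal height. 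Your proposed argument---that a persisting $P^*$ at maximal height, ``combined with the subset information for the remaining primes,'' forces $\I_{(j+1)r}=\I_{jr}$---is exactly what fails: one fixed component says nothing about whether the \emph{other} components move.

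The paper's proof sidesteps the maximum-height invariant entirely. Instead of tracking a global quantity, it constructs a single strictly decreasing chain of primes $P_0\supsetneq P_1\supsetneq\dots\supsetneq P_{k_0}$ with $P_k$ in the minimal decomposition of $\I_{rk}$, arguing backwards from the last non-terminating step. The persistence observation you already isolated is used differently: if some $P_k$ equalled its parent $P_{k-1}$, then $P_{k-1}$ would appear in the minimal decomposition of every later $\I_{rk'}$; in particular both $P_{k-1}$ and the strictly smaller $P_{k+1}$ would sit in the minimal decomposition of $\I_{r(k+1)}$, contradicting minimality. Hence every link of the chain is proper, and since $P_0=\I_0$ has height $m$, the chain length---and thus the number of outer iterations---is at most $m$. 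This chain argument is the missing ingredient; replacing your max-height descent with it repairs the proof.
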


\begin{proof}
  Suppose the algorithm terminates after $k_0$ iterations of the outer loop. We
  look at the ideals $\I_{r\cdot k}$, $k\in\{0,\dots,k_0\}$.  For a prime ideal
  $P$ in the minimal decomposition of any $\I_{r\cdot (k+1)}$, there is a prime
  ideal $Q$ in the minimal decomposition of $\I_{r\cdot k}$ such that $P\subseteq
  Q$. If $P=Q$, then $P$ is a prime ideal in the minimal decomposition of each
  $\I_{r\cdot (k')}$, $k'>k$. This holds because there are only $r$ many $J_i$.
  So if $Q$ does not get replaced by smaller prime ideals in $\I_{r\cdot
  k+1},\I_{r\cdot k+2}\dots,\I_{r\cdot (k+1)}$, it has to be part of the minimal
  decomposition for any subsequent $\I_i$. From this it follows that, for each
  $k$, there is a prime ideal $P_k$ in the minimal decomposition in $\I_{r\cdot
  k}$, such that $P_0\supset P_1\supset\dots\supset P_{k_0}$ is a chain of
  proper superset relations, which then proves the claim since the height of
  $P_0=\I_0$ is $m$.
\end{proof}

\begin{example}\label{ex:euclidex}
  Consider a multi-path loop $L$
  \[\WHILE\ \dots\ \DO~L_1;L_2~\OD\]
  containing the following nested loops $L_1$ and $L_2$ and the corresponding
  closed form solutions:
  \begin{center}
  \begin{minipage}{.22\columnwidth}
    \center
    \begin{tabular}{l}
      \WHILE\ \dots\ \DO\\
      \quad $a\ass  a - b$\\ 
      \quad $p\ass  p - q$\\ 
      \quad $r\ass  r - s$\\
      \OD\\
    \end{tabular}
  \end{minipage}
  \begin{minipage}{.22\columnwidth}
    \center
    \begin{tabular}{l}
      \\
      $a_n = a_0 - nb_0$\\ 
      $p_n = p_0 - nq_0$\\ 
      $r_n = r_0 - ns_0$\\ 
      \\
    \end{tabular}
  \end{minipage}
  \qquad
  \begin{minipage}{.22\columnwidth}
    \center
    \begin{tabular}{l}
      \WHILE\ \dots\ \DO\\
      \quad $b\ass  b - a$\\ 
      \quad $q\ass  q - p$\\ 
      \quad $s\ass  s - r$\\
      \OD\\
    \end{tabular}
  \end{minipage}
  \begin{minipage}{.22\columnwidth}
    \center
    \begin{tabular}{l}
      \\
      $b_m = b_0 - ma_0$\\ 
      $q_m = q_0 - mp_0$\\ 
      $s_m = s_0 - mr_0$\\ 
      \\
    \end{tabular}
  \end{minipage}
  \end{center}

For simplicity we chose inner loops without algebraic dependencies, i.e.~$I$ at
line~\ref{line:algdep} will be the zero ideal and we therefore neglect it in the
following computation. Moreover, we write $a_i$ instead of $a^{(i)}$. We start
with
{%
\setlength\abovedisplayskip{5pt}%
\setlength\belowdisplayskip{5pt}%
\[ \I_0 = \langle a_1-a_0, b_1-b_0, p_1-p_0, q_1-q_0, r_1-r_0, s_1-s_0  \rangle \] 
}%
followed by the first loop iteration: 
{%
\setlength\abovedisplayskip{5pt}%
\setlength\belowdisplayskip{5pt}%
\begin{align*}
  \I_1 &=(J_1 + \I_0) \cap \K[a_0,b_0,p_0,q_0,r_0,s_0,a_2,b_2,p_2,q_2,r_2,s_2]\\
    &= \langle b_0-b_2,q_0-q_2,s_0-s_2,-p_0 s_2+p_2 s_2+q_2 r_0-q_2 r_2,\\
    &\qquad a_0 s_2-a_2 s_2-b_2 r_0+b_2 r_2,a_0 q_2-a_2 q_2-b_2 p_0+b_2 p_2 \rangle 
\end{align*}
}%
where
{%
\setlength\abovedisplayskip{5pt}%
\setlength\belowdisplayskip{5pt}%
\[ J_1 = \langle a_2-a_1+b_1 n,p_2 - p_1 + q_1 n,r_2 - r_1 + s_1 n,b_2-b_1,q_2-q_1,s_2-s_1 \rangle \]
}%
The following ideal $\I_2$ is then the invariant ideal for the first iteration
of the outer loop $L$.
{%
\setlength\abovedisplayskip{5pt}%
\setlength\belowdisplayskip{5pt}%
  \begin{align*}
    \I_2 &=(J_2 + \I_1) \cap \K[a_0,b_0,p_0,q_0,r_0,s_0,a_3,b_3,p_3,q_3,r_3,s_3]\\
      &= \langle-p_0 r_3 s_0+p_3 r_3 s_3+p_3 r_0 s_0-p_3 r_0 s_3-q_3 r_3^2+q_3 r_0 r_3,\\
      &\qquad -p_3 s_0+p_3 s_3+q_0 r_3-q_3 r_3,  -p_0 s_0+p_3 s_3+q_0 r_0-q_3 r_3, \\
      &\qquad a_3 s_0-a_3 s_3-b_0 r_3+b_3 r_3, a_0 q_0-a_3 q_3-b_0 p_0+b_3 p_3,\\
      &\qquad a_3 p_0 s_3-a_3 p_3 s_3-a_3 q_3 r_0+a_3 q_3 r_3-b_0 p_3 r_0+b_3 p_3 r_0+b_0 p_0 r_3-b_3 p_0 r_3,\\
      &\qquad a_3 q_0-a_3 q_3-b_0 p_3+b_3 p_3, a_0 s_0-a_3 s_3-b_0 r_0+b_3 r_3,\\
      &\qquad -a_0 p_3 s_3+a_3 p_3 s_3+a_0 q_3 r_3-a_3 q_3 r_3+b_0 p_3 r_0-b_0 p_0 r_3,\\
      &\qquad -a_3 b_0 r_0+a_3 b_3 r_3+a_0 b_0 r_3-a_0 b_3 r_3-a_3^2 s_3+a_0 a_3 s_3,\\
      &\qquad -a_3 b_0 p_0+a_3 b_3 p_3+a_0 b_0 p_3-a_0 b_3 p_3-a_3^2 q_3+a_0 a_3 q_3\rangle
  \end{align*}
}%
where
{%
\setlength\abovedisplayskip{5pt}%
\setlength\belowdisplayskip{5pt}%
\[ J_2 = \langle b_3 -b_2+ a_2 m,q_3 -q_2 + p_2 m,s_3 -s_2 + r_2 m,a_3-a_2,p_3-p_2,r_3-r_2 \rangle \]
}%
By continuing this computation we get the following ideals $\I_4$ and $\I_6$
which are the invariant ideals after two and three iterations of the outer loop
$L$ respectively.
%
%
\begin{align*}
  \I_4 &= \langle p_0 s_0 - p_5 s_5 - r_0 q_0 + r_5 q_5, \\
    &\qquad b_5 p_5 - b_0 p_0 + a_0 q_0 - a_5 q_5, \\
    &\qquad b_5 r_5 - b_0 r_0 + a_0 s_0 - a_5 s_5, \\
    &\qquad b_5 (-p_5 s_0 + r_5 q_0) + b_0 (p_5 s_5 - r_5 q_5) + a_5 (-s_5 q_0 + s_0 q_5),  \\
    &\qquad b_5 (-p_5 r_0 + p_0 r_5) + a_5 (-p_0 s_5 + r_0 q_5) +  a_0 (p_5 s_5 - r_5 q_5), \\  
    &\qquad b_0 p_0 (-p_5 s_5 + r_5 q_5) + b_5 (p_5^2 s_5 - p_0 r_5 q_0 + p_5 (r_0 q_0 - r_5 q_5)) + {}\\
    &\qquad\qquad a_5 (p_0 s_5 q_0 + q_5 (-p_5 s_5 - r_0 q_0 + r_5 q_5))\rangle
\end{align*}
\begin{align*}
  \I_6 &= \langle p_0 s_0 - p_7 s_7 - r_0 q_0 + r_7 q_7, \\
    &\qquad b_7 p_7 - b_0 p_0 + a_0 q_0 - a_7 q_7, \\
    &\qquad b_7 r_7 - b_0 r_0 + a_0 s_0 - a_7 s_7, \\
    &\qquad b_7 (-p_7 s_0 + r_7 q_0) + b_0 (p_7 s_7 - r_7 q_7) + a_7 (-s_7 q_0 + s_0 q_7),  \\
    &\qquad b_7 (-p_7 r_0 + p_0 r_7) + a_7 (-p_0 s_7 + r_0 q_7) +  a_0 (p_7 s_7 - r_7 q_7), \\  
    &\qquad b_0 p_0 (-p_7 s_7 + r_7 q_7) + b_7 (p_7^2 s_7 - p_0 r_7 q_0 + p_7 (r_0 q_0 - r_7 q_7)) + {}\\
    &\qquad\qquad a_7 (p_0 s_7 q_0 + q_7 (-p_7 s_7 - r_0 q_0 + r_7 q_7))\rangle
\end{align*}
Note that we now reached the fixed point as $\I_6 =
\I_4|_{\fls{v}^{(5)}\leftarrow \fls{v}^{(7)}}$.
\end{example}

Corollary~\ref{cor:bound} provides a bound on the number of iterations in
Algorithm~\ref{alg:polyinv-mutlipath}. Therefore, we know at which stage we have
to reach the fixed point of the computation at the latest, viz.~after computing
$\I_{r\cdot m}$. This fact allows us to construct a new algorithm which computes
the ideal $\I_{r\cdot m}$ directly instead of doing a fixed point computation.
The benefit of Algorithm~\ref{alg:nofixedpoint} is that we have to perform only
one Gr\"obner basis computation in the end, although the new algorithm might
performs more iterations than Algorithm~\ref{alg:polyinv-mutlipath}.

\begin{algorithm}
  \caption{Invariant generation without fixed point computation}
  \label{alg:nofixedpoint}
  \begin{algorithmic}[1]
    \vskip.5\baselineskip
    \Require{Loop bodies $B_1,\dots,B_r$ as described.}
    \Ensure{The ideal of all polynomial invariants of $(B_1^*;B_2^*;\dots;B_r^*)^*$.}
    \algrule
    \State Compute $I:=I(\fls{\theta},\fls{\zeta})$ as described above
    \State $\I_{new}=\smash{\sum_{j=1}^m\langle v^{(1)}_j-v^{(0)}_i\rangle} + I$
    \For{$j = 1,\dots,m$}
      \For{$i=1,\dots,r$}
        \State $\I_{new} \gets (J_{i\cdot j}+\I_{new})$
      \EndFor
    \EndFor
    \State \Return $\I_{new} \cap \K[\fls{v}^{(m\cdot r+1)},\fls{v}^{(0)}]$
  \end{algorithmic}
\end{algorithm}

The proof of termination of the invariant generation method
of~\cite{completeinvariant} assumes that the ideal of algebraic dependencies is
prime. In general, this does not hold. Consider the following loop and its
closed forms with exponential sequences $2^n$ and $(-2)^n$:
\begin{center}
  \begin{minipage}{.35\columnwidth}
    \center
    \begin{tabular}{l}
      \WHILE\ \dots\ \DO\\
      \quad $x\ass  2 x$\\ 
      \quad $y\ass  -2 y$\\ 
      \OD\\
    \end{tabular}
  \end{minipage}
  \begin{minipage}{.35\columnwidth}
    \center
    \begin{tabular}{l}
      \\
      $x(n) = 2^n \cdot x(0)$\\ 
      $y(n) = (-2)^n \cdot y(0)$\\ 
      \\
    \end{tabular}
  \end{minipage}
\end{center}
The ideal of algebraic dependencies among the before-mentioned exponential
sequences is given by $\langle a^2 - b^2 \rangle$ which is obviously not prime.
As a consequence, the termination proof of~\cite{completeinvariant} is
incorrect. This paper closes this gap by providing a new algorithm and a
corresponding termination proof.


\section{Implementation and Experiments}\label{sec:implementation}

We implemented our method in the Mathematica package \Aligator\footnote{\Aligator requires the Mathematica
packages Hyper~\cite{hyper}, Dependencies~\cite{dependencies} and
FastZeil~\cite{fastzeil}, where the latter two are part of the compilation
package ErgoSum~\cite{ergosum}.}. \Aligator is open source and available at:
\begin{quote}\footnotesize\url{https://ahumenberger.github.io/aligator/}\end{quote}
%




\newcommand{\Fastind}{\textsc{Fastind}}
\newcommand{\Duet}{\textsc{Duet}}

\paragraph{\bf Comparison of generated invariants.} Based on the examples in
Figure~\ref{fig:loops} we show that our technique can infer invariants which
cannot be found by other state-of-the-art approaches. Our observations indicate
that our method is superior to existing approaches if the loop under
consideration has some \emph{mathematical meaning} like division or
factorization algorithms as depicted in Figure~\ref{fig:loops}, whereas the
approach of \cite{kincaidPOPL18} has advantages when it comes to programs with
complex flow.

The techniques of \cite{farewellgroebner} and \cite{kincaidPOPL18} were
implemented in tools called \Fastind{}\footnote{Available at
\url{http://www.irisa.fr/celtique/ext/polyinv/}} and \Duet{}\footnote{Available
at \url{https://github.com/zkincaid/duet}} respectively. 
Unlike \Aligator{} and \Fastind{}, \Duet{} is not a pure inference engine for
polynomial invariants, instead it tries to prove user-specified safety
assertions. In order to check which invariants can be generated by \Duet{}, we
therefore asserted the invariants computed by \Aligator{} and checked if \Duet{}
can prove them.

\begin{figure}
  \begin{subfigure}[b]{0.30\textwidth}
  \centering  
  \begin{tabular}{l}
    \WHILE\ $a \neq b$\ \DO\\
    \quad\IF\ $a > b$\ \THEN \\
      \qquad $a\ass  a - b$\\ 
      \qquad $p\ass  p - q$\\ 
      \qquad $r\ass  r - s$\\
    \quad\ELSE\\
      \qquad $b\ass  b - a$\\ 
      \qquad $q\ass  q - p$\\ 
      \qquad $s\ass  s - r$\\
    \quad\FI\\
    \OD\\
  \end{tabular}
  \caption{}
  \label{fig:euclidex}
  \end{subfigure}
  \quad
  %
  \begin{subfigure}[b]{0.30\textwidth}
  \centering  
  \begin{tabular}{l}
    \WHILE\ $r \neq 0$\ \DO\\
    \quad\IF\ $r > 0$\ \THEN \\
      \qquad $r\ass  r - v$\\ 
      \qquad $v\ass  v + 2$\\ 
    \quad\ELSE\\
      \qquad $r\ass  r + u$\\ 
      \qquad $u\ass  u + 2$\\ 
    \quad\FI\\
    \OD\\\\\\
  \end{tabular}
  \caption{}
  \label{fig:fermat}
  \end{subfigure}
  \quad
  %
  \begin{subfigure}[b]{0.30\textwidth}
  \centering
  \begin{tabular}{l}
    \WHILE\ $d \geq E$\ \DO\\
    \quad\IF\ $P < a + b$\ \THEN \\
      \qquad $b\ass  b / 2$\\ 
      \qquad $d\ass  d / 2$\\ 
    \quad\ELSE\\
      \qquad $a\ass  a + b$\\ 
      \qquad $y\ass  y + d / 2$\\ 
      \qquad $b\ass  b / 2$\\ 
      \qquad $d\ass  d / 2$\\ 
    \quad\FI\\
    \OD\\
  \end{tabular}
  \caption{}
  \label{fig:wensley}
  \end{subfigure}

  \caption{Three examples: (a) Extended Euclidean algorithm, (b) a variant of
  Fermat's factorization algorithm and (c) Wensley's algorithm for real
  division.}
  \label{fig:loops}
\end{figure}

Let us consider the loop depicted in Figure~\ref{fig:euclidex}. Since we treat
conditional branches as inner loops, we have that the invariants for this loop
are the same as for the loop in Example~\ref{ex:euclidex}. By instantiating the
generated invariants with the following initial values on the left we get the
following polynomial invariants on the right: 
%
\begin{center}
\vspace{-\baselineskip}
\begin{minipage}{0.3\textwidth}
\begin{align*}
    \hspace{5em} a_0 &\mapsto x \qquad \\
    b_0 &\mapsto y \qquad \\
    p_0 &\mapsto 1 \qquad \\
    q_0 &\mapsto 0 \qquad \\
    r_0 &\mapsto 0 \qquad \\
    s_0 &\mapsto 1
\end{align*}
\end{minipage}
\begin{minipage}{0.69\textwidth}
\begin{align}
  \tag{$I_1$} &1 + qr - ps               \\ 
  \tag{$I_2$} &bp - aq - y               \\
  \tag{$I_3$} &br - as + x               \\ 
  \tag{$I_4$} &-bp + aq - qry + psy      \\ 
  \tag{$I_5$} &br - as - qrx + psx       \\
  \tag{$I_6$} &(qr - ps)(-bp + aq + y)
\end{align}
\end{minipage}
\end{center}
Note that ($I_4$)-($I_6$) are just linear combinations of $(I_1)$-$(I_3)$.
However, \Fastind{} was able to infer $(I_1)$-$(I_3)$, whereas \Duet{} was only
able to prove $(I_2)$, $(I_5)$ and~$(I_6)$.

Other examples where \Aligator{} is superior in terms of the number of inferred
invariants are given by the loops in Figures~\ref{fig:fermat}
and~\ref{fig:wensley}. For Fermat's algorithm (Figure~\ref{fig:fermat}) and the
following initial values, \Aligator{} found one invariant, which was also found
by \Fastind{}. However, \Duet{} was not able to prove it.

\begin{center}
\vspace{-\baselineskip}
\begin{minipage}{.3\textwidth}
  \begin{align*}
    \hspace{4em}u_0 &\mapsto 2R + 1\\
    v_0 &\mapsto 1\\
    r_0 &\mapsto RR - N
  \end{align*}
\end{minipage}
\begin{minipage}{.69\textwidth}
  \begin{align*}
    \tag{$I_7$} &-4N - 4r - 2u + u^2 + 2v - v^2
  \end{align*}
\end{minipage}
\end{center}

In case of Wensley's algorithm (Figure~\ref{fig:wensley}) \Aligator{} was able
to identify the following three invariants. \Fastind{} inferred the first two
invariants, whereas \Duet{} could not prove any of them.
%
\begin{center}
\vspace{-\baselineskip}
\begin{minipage}{.3\textwidth}
  \begin{align*}
    \hspace{5em} a_0 &\mapsto 0 \\
    b_0 &\mapsto Q/2 \\
    d_0 &\mapsto 1 \\
    y_0 &\mapsto 0
  \end{align*}
\end{minipage}
\begin{minipage}{.69\textwidth}
  \begin{align*}
    \tag{$I_{8}$}&2b - dQ \\
    \tag{$I_{9}$}&ad - 2by \\
    \tag{$I_{10}$}&a - Qy
  \end{align*}
\end{minipage}
\vspace{-\baselineskip}
\end{center}



\noindent
\paragraph{\bf Benchmarks and Evaluation.}
For the experimental evaluation of our approach, we used the
following set of examples: 
(i)  18 programs taken from~\cite{farewellgroebner}; 
(ii) 4 new programs of extended
P-solvable loops that were created by us. All examples are available
at  the repository of \Aligator. 

Our experiments were performed on a machine with a 2.9 GHz Intel Core i5 and 16
GB LPDDR3 RAM; for each example, a timeout of $300$ seconds was set. When using
\Aligator{}, the Gr\"obner basis of the invariant ideal computed by {\Aligator}
was non-empty for each example; that is, for each example we were able to find
non-trivial invariants.

We evaluated \Aligator{} against \Fastind{}. As \Duet{} is not a pure inference
engine for polynomial invariants, we did not include it in the following
evaluation. When compared to~\cite{farewellgroebner}, we note that we do not fix
the degree of the polynomial invariants to be generated. Moreover, our method is
complete. That is, whenever \Aligator{} terminates, the basis of the polynomial
invariant ideal is inferred; any other polynomial invariant is a linear
combination of the basis polynomials.

\newcolumntype{R}{>{$}r<{$}}

\begin{table}
  \centering
  \caption{Experimental evaluation of \Aligator.}
  \label{tab:benchmarks}
  \def\arraystretch{1.08}
  \begin{subtable}{.39\textwidth}
  \caption{}
  \label{tab:singlepath}
  \begin{tabular}{|l|R|R|}
    \hline
    \textit{Single-path} & \Aligator & \Fastind{} \\
    \hline
    \texttt{cohencu}  & 0.072   & 0.043 \\\hline
    \texttt{freire1}  & 0.016   & 0.041 \\\hline
    \texttt{freire2}  & 0.062   & 0.048 \\\hline
    \texttt{petter1}  & 0.015   & 0.040 \\\hline
    \texttt{petter2}  & 0.026   & 0.042 \\\hline
    \texttt{petter3}  & 0.035   & 0.051 \\\hline
    \texttt{petter4}  & 0.042   & 0.104 \\\hline
    \texttt{petter5}  & 0.053   & 0.261 \\\hline
    \texttt{petter20} & 48.290  & 9.816 \\\hline
    \texttt{petter22} & 247.820 & 9.882 \\\hline
    \texttt{petter23} & TO      & 9.853 \\\hline
  \end{tabular}
  \end{subtable}
  \quad
  \begin{subtable}{.575\textwidth}    
  \caption{}
  \label{tab:multipath}  
  \begin{tabular}{|l|R|R|R|R|R|R|}
    \hline
    \textit{Multi-path} &  \#b & \#v & \#i &  \textsc{Al1} & \textsc{Al2} & \Fastind{}  \\\hline
    \texttt{divbin}     &   2   &  3  &  2  &   0.134    & 45.948  & 0.045  \\\hline
    \texttt{euclidex}   &   2   &  6  &  3  &   0.433    & TO      & 0.049  \\\hline
    \texttt{fermat}     &   2   &  3  &  2  &   0.045    & 0.060   & 0.043  \\\hline
    \texttt{knuth}      &   4   &  5  &  2  &   ~55.791   & TO      & 1.025  \\\hline
    \texttt{lcm}        &   2   &  4  &  3  &   0.051    & ~87.752  & 0.043  \\\hline
    \texttt{mannadiv}   &   2   &  3  &  2  &   0.022    & 0.025   & 0.048  \\\hline
    \texttt{wensley}    &   2   &  4  &  2  &   0.124    & 41.851  & err    \\\hline
    \texttt{extpsolv2}  &   2   &  3  &  2  &   0.192    & TO      & err    \\\hline
    \texttt{extpsolv3}  &   3   &  3  &  2  &   0.295    & TO      & err    \\\hline
    \texttt{extpsolv4}  &   4   &  3  &  2  &   0.365    & TO      & err    \\\hline
    \texttt{extpsolv10}~ &   10  &  3  &  2  &   0.951    & TO      & err    \\\hline
  \end{tabular}
  \end{subtable}
  \begin{tabular}{rcl}
    \\
    $\#b,\#v$ & $\dots$ & number of branches, variables \\
    $\#i$ & $\dots$ & number of iterations until fixed point reached \\
    $\textsc{Al1}$ & $\dots$ & \Aligator with Algorithm~\ref{alg:polyinv-mutlipath} (timeout $300s$) \\
    $\textsc{Al2}$ & $\dots$ & \Aligator with Algorithm~\ref{alg:nofixedpoint} (timeout $100s$) \\
    \Fastind{} & $\dots$ & OCaml version of the tool in~\cite{farewellgroebner}\footnotemark \\
    $TO,err$ & $\dots$ & timeout, error
  \end{tabular}
  \vspace{-\baselineskip}
\end{table}
\footnotetext{Testing the Maple implementation was not possible due to constraints regarding the Maple version.}

Table~\ref{tab:singlepath} summarizes our experimental results on single-path
loops, whereas Table~\ref{tab:multipath} reports on the results from multi-path
programs. The first column of each table lists the name of the benchmark. The
second and third columns of Table~\ref{tab:singlepath} report, on the timing
results of \Aligator{} and {\sc Fastind}, respectively. In
Table~\ref{tab:multipath}, the second column lists the number of branches
(paths) of the multi-path loop, whereas the third column gives the number of
variables used in the program. The fourth column reports on the number of
iterations until the fixed point is reached by
\Aligator{}, and hence terminates. The fifth and sixth columns,
labeled {\sc Al1} and {\sc Al2}, show the performance
of \Aligator{} when using Algorithm~\ref{alg:polyinv-mutlipath} or 
Algorithm~\ref{alg:nofixedpoint}, respectively. The last column of
Table~\ref{tab:multipath} lists the results obtained by {\sc
  Fastind}. In both tables, timeouts are denoted by $TO$, whereas
errors, due to the fact that the tool cannot be evaluated on the
respective example, are given as $err$.

The results reported in Tables~\ref{tab:singlepath} and~\ref{tab:multipath} show
the efficiency of \Aligator{}: in 14 out of 18 examples, \Aligator{} performed
significantly better than {\sc FastInd}. For the examples
\texttt{petter20}, \texttt{petter22} and \texttt{petter23}, the time-consuming
part in \Aligator{} comes from recurrence solving (computing the closed form of
the recurrence), and not from the Gr\"obner basis computation. We intend to
improve this part of \Aligator{} in the future. The examples 
\texttt{extpsolv2}, \texttt{extpsolv3}, \texttt{extpsolv4} and
\texttt{extpsolv10} are extended P-solvable loops with respectively 2,
3, 4, and 10 
nested conditional branches. The polynomial arithmetic of these
examples is not supported by \Fastind{}. The results of
\Aligator{} on  these examples indicate
that extended P-solvable loops do not increase the complexity of computing
the invariant ideal. 

We also compared the performance of \Aligator{} with
Algorithm~\ref{alg:polyinv-mutlipath} against Algorithm~\ref{alg:nofixedpoint}.
As shown in columns 5 and 6 of Table~\ref{tab:multipath},
Algorithm~\ref{alg:nofixedpoint} is not as efficient as
Algorithm~\ref{alg:polyinv-mutlipath}, even though
Algorithm~\ref{alg:nofixedpoint} uses only a single Gr\"obner basis computation.
We conjecture that this is due to the increased number of variables in the
polynomial system which influences the Gr\"obner basis computation. We therefore
conclude that several small Gr\"obner basis computations (with fewer variables)
perform better than a single large one.


\section{Conclusions}

We proposed a new  algorithm for computing the ideal of all polynomial
invariants for the class of extended P-solvable multi-path loops. The new
approach computes the invariant ideal for a non-deterministic program
$(L_1;\dots;L_r)^*$ where the $L_i$ are single-path loops. As a consequence, the
proposed method can handle loops containing (i) an arbitrary nesting of
conditionals, as these conditional branches can be transformed into a sequence
of single-path loops by introducing flags, and (ii) one level of nested
single-path loops.

Our method computes the ideals $\I_1,\I_2,\dots$ until a fixed point is reached
where $\I_i$ denotes the invariant ideal of $(L_1;\dots;L_r)^i$. This fixed
point is then a basis for the ideal containing all polynomial invariants for the
extended P-solvable loop. We showed that this fixed point computation is
guaranteed to terminate which implies the completeness of our method.
Furthermore, we gave a bound on the number of iterations we have to perform to
reach the fixed point. The proven bound is given by $m$ iterations where $m$ is
the number of loop variables.

We showed that our method can generate invariants which cannot be inferred by
other state-of-the-art techniques. In addition, we showcased the efficiency of
our approach by comparing our Mathematica package \Aligator{} with
state-of-the-art tools in invariant generation. 

Future research directions include the incorporation of the loop condition into
our method. So far we operate on an abstraction of the loop where we ignore the
loop condition and treat the loop as a non-deterministic program. By doing so we
might loose valuable information about the control flow of the program. By
employing $\mathrm{\Pi\Sigma^*}$-theory~\cite{carsten2} it might be possible to extend
our work also to loops containing arbitrary nesting of inner loops, which
reflects another focus for further research.


\vspace{1em}

\par\noindent
{\bf Acknowledgments.} We want to thank the anonymous reviewers for their
helpful comments and remarks.

\balance
\bibliographystyle{splncs03}
\bibliography{references}

\begin{thebibliography}{10}
\providecommand{\url}[1]{\texttt{#1}}
\providecommand{\urlprefix}{URL }

\bibitem{Buchberger06}
Buchberger, B.: {An Algorithm for Finding the Basis Elements of the Residue
  Class Ring of a Zero Dimensional Polynomial Ideal}. J. Symbolic Computation
  41(3-4),  475--511 (2006)

\bibitem{farewellgroebner}
Cachera, D., Jensen, T.P., Jobin, A., Kirchner, F.: {Inference of Polynomial
  Invariants for Imperative Programs: {A} Farewell to Gr{\"{o}}bner Bases}. In:
  Min{\'{e}}, A., Schmidt, D. (eds.) Static Analysis - 19th International
  Symposium, {SAS} 2012, Deauville, France, September 11-13, 2012. Proceedings.
  Lecture Notes in Computer Science, vol. 7460, pp. 58--74. Springer (2012)

\bibitem{farzan}
Farzan, A., Kincaid, Z.: Compositional recurrence analysis. In: Proc. of FMCAD.
  pp. 57--64. FMCAD Inc, Austin, TX (2015)

\bibitem{issac2017}
Humenberger, A., Jaroschek, M., Kov\'{a}cs, L.: {Automated Generation of
  Non-Linear Loop Invariants Utilizing Hypergeometric Sequences}. In:
  Proceedings of the 2017 ACM on International Symposium on Symbolic and
  Algebraic Computation. pp. 221--228. ISSAC '17, ACM, New York, NY, USA (2017)

\bibitem{kauers}
Kauers, M., Paule, P.: The Concrete Tetrahedron. Text and Monographs in
  Symbolic Computation, Springer Wien, 1st edn. (2011)

\bibitem{dependencies}
Kauers, M., Zimmermann, B.: {Computing the algebraic relations of C-finite
  sequences and multisequences}. Journal of Symbolic Computation  43(11),  787
  -- 803 (2008)

\bibitem{kincaidPOPL18}
Kincaid, Z., Cyphert, J., Breck, J., Reps, T.: {Non-Linear Reasoning For
  Invariant Synthesis}. In: POPL (2018), to appear

\bibitem{laura}
Kov{\'{a}}cs, L.: {Automated Invariant Generation by Algebraic Techniques for
  Imperative Program Verification in Theorema}. Ph.D. thesis, RISC, Johannes
  Kepler University Linz (October 2007)

\bibitem{reasoningalgebraically}
Kov{\'{a}}cs, L.: {Reasoning Algebraically About P-Solvable Loops}. In: Tools
  and Algorithms for the Construction and Analysis of Systems, 14th
  International Conference, {TACAS} 2008, Held as Part of the Joint European
  Conferences on Theory and Practice of Software, {ETAPS} 2008, Budapest,
  Hungary, March 29-April 6, 2008. Proceedings. pp. 249--264 (2008)

\bibitem{completeinvariant}
Kov{\'{a}}cs, L.: {A Complete Invariant Generation Approach for P-solvable
  Loops}. In: Perspectives of Systems Informatics, 7th International Andrei
  Ershov Memorial Conference, {PSI} 2009, Novosibirsk, Russia, June 15-19,
  2009. Revised Papers. pp. 242--256 (2009)

\bibitem{olm}
M{\"{u}}ller{-}Olm, M., Seidl, H.: {A Note on Karr's Algorithm}. In: Automata,
  Languages and Programming: 31st International Colloquium, {ICALP} 2004,
  Turku, Finland, July 12-16, 2004. Proceedings. pp. 1016--1028 (2004)

\bibitem{oliveira}
de~Oliveira, S., Bensalem, S., Prevosto, V.: Polynomial invariants by linear
  algebra. In: Artho, C., Legay, A., Peled, D. (eds.) Proc. of ATVA. pp.
  479--494. Springer (2016)

\bibitem{fastzeil}
Paule, P., Schorn, M.: {A Mathematica Version of Zeilberger’s Algorithm for
  Proving Binomial Coefficient Identities}. Journal of Symbolic Computation
  20,  673 -- 698 (1995)

\bibitem{hyper}
Petkov{\v{s}}ek, M.: Mathematic package hyper (1998),
  \url{http://www.fmf.uni-lj.si/~petkovsek/}

\bibitem{ergosum}
{Research Institute for Symbolic Computation.}: {Mathematic Package ErgoSum}
  (2016), \url{http://www.risc.jku.at/research/combinat/software/ergosum/}

\bibitem{Carbonell07a}
Rodriguez-Carbonell, E., Kapur, D.: {Automatic Generation of Polynomial
  Invariants of Bounded Degree using Abstract Interpretation}. J. Science of
  Computer Programming  64(1),  54--75 (2007)

\bibitem{kapur}
Rodr\'{\i}guez-Carbonell, E., Kapur, D.: Generating all polynomial invariants
  in simple loops. Journal of Symbolic Computation  42(4),  443 -- 476 (2007)

\bibitem{san}
Sankaranarayanan, S., Sipma, H.B., Manna, Z.: Non-linear loop invariant
  generation using gr\"{o}bner bases. In: Proc. of POPL. pp. 318--329. ACM, New
  York, NY, USA (2004)

\bibitem{carsten2}
Schneider, C.: Summation theory ii: Characterizations of
  $r\pi\sigma$-extensions and algorithmic aspects. J. Symb. Comput.  80(3),
  616--664 (2017), arXiv:1603.04285 [cs.SC]

\bibitem{datadriveninvariants}
Sharma, R., Gupta, S., Hariharan, B., Aiken, A., Liang, P., Nori, A.V.: {A Data
  Driven Approach for Algebraic Loop Invariants}. In: Felleisen, M., Gardner,
  P. (eds.) Programming Languages and Systems - 22nd European Symposium on
  Programming, {ESOP} 2013, Held as Part of the European Joint Conferences on
  Theory and Practice of Software, {ETAPS} 2013, Rome, Italy, March 16-24,
  2013. Proceedings. Lecture Notes in Computer Science, vol. 7792, pp.
  574--592. Springer (2013)

\end{thebibliography}

\end{document}